\documentclass[lettersize,journal]{IEEEtran}
\IEEEoverridecommandlockouts

\ifCLASSINFOpdf
\else
\fi
\usepackage{color}
\usepackage{lettrine}
\usepackage{citesort}
\usepackage{cite}
\usepackage{amsfonts}
\usepackage{amssymb}
\usepackage{amsmath}
\usepackage{mathrsfs}
\usepackage[dvips]{graphicx}
\usepackage{verbatim}
\usepackage{subfigure}
\usepackage{balance}
\usepackage{booktabs}
\usepackage{fancybox}
\usepackage{bm}
\usepackage{extarrows}
\usepackage{algorithm}
\usepackage{algorithmic}
\usepackage{multirow}
\usepackage{array}
\newtheorem{theorem}{Theorem}

\newtheorem{corollary}{Corollary}
\newtheorem{proof}{Proof}

\usepackage{lipsum}
\usepackage{graphicx}

\hyphenation{op-tical net-works semi-conduc-tor}
\def\BibTeX{{\rm B\kern-.05em{\sc i\kern-.025em b}\kern-.08em
    T\kern-.1667em\lower.7ex\hbox{E}\kern-.125emX}}
\begin{document}

\title{Channel Reciprocity Attacks Using Intelligent Surfaces with Non-Diagonal Phase Shifts}
\author
{\IEEEauthorblockN{Haoyu~Wang, Zhu Han~\IEEEmembership{Fellow, IEEE}, and A.~Lee Swindlehurst,~\IEEEmembership{Fellow, IEEE}}
\thanks{H. Wang and A.~L.~Swindlehurst are with the Center for Pervasive Communications and Computing, University of California at Irvine, Irvine, CA 92697 USA (e-mail: haoyuw30@uci.edu, swindle@uci.edu). Z. Han is with the Department of Electrical and Computer Engineering at the University of Houston, Houston, TX 77004 USA, and also with the Department of Computer Science and Engineering, Kyung Hee University, Seoul, South Korea, 446-701 (email: hanzhu22@gmail.com).}
}

\maketitle

\newcommand\blfootnote[1]{%
\begingroup
\renewcommand\thefootnote{}\footnote{#1}%
\addtocounter{footnote}{-1}%
\endgroup
}



\begin{abstract}
While reconfigurable intelligent surface (RIS) technology has been shown to provide numerous benefits to wireless systems, in the hands of an adversary such technology can also be used to disrupt communication links. This paper describes and analyzes an RIS-based attack on multi-antenna wireless systems that operate in time-division duplex mode under the assumption of channel reciprocity. In particular, we show how an RIS with a non-diagonal (ND) phase shift matrix (referred to here as an ND-RIS) can be deployed to maliciously break the channel reciprocity and hence degrade the downlink network performance. Such an attack is entirely passive and difficult to detect and counteract. We provide a theoretical analysis of the degradation in the sum ergodic rate that results when an arbitrary malicious ND-RIS is deployed and design an approach based on the genetic algorithm for optimizing the ND structure under partial knowledge of the available channel state information. Our simulation results validate the analysis and demonstrate that an ND-RIS channel reciprocity attack can dramatically reduce the downlink throughput.
\end{abstract}

\begin{IEEEkeywords}
Channel Reciprocity, Passive Attack, Channel Estimation, Reconfigurable Intelligent Surfaces, Physical Layer Security.
\end{IEEEkeywords}

\IEEEpeerreviewmaketitle

\blfootnote{The work of H. Wang and A. L. Swindlehurst were supported by the U.S. National Science Foundation (NSF) under grants ECCS-2030029, and CNS-2107182. The work of Z.~Han was supported by the US Department of Transportation, Amazon, Toyota, and NSF grants CNS-2107216, CNS-2128368, CMMI-2222810, and ECCS-2302469.}

\section{ INTRODUCTION}\label{s1}
Wireless communication has become an indispensable component of everyday life and is poised to offer a plethora of reliable data services in the future. As smart devices continue to proliferate at a rapid pace, ensuring the security of wireless communication has emerged as a paramount concern~\cite{7467419}. Owing to the inherent traits of the open communications environment, encompassing the superposition and broadcast nature of wireless channels, wireless networks are susceptible to malicious attacks or eavesdropping. Consequently, it becomes crucial and yet challenging to guarantee the secure transmission of wireless data. Encryption is a commonly used tool for improving wireless security, but it comes with high computational costs and complexity. In recent years, physical layer techniques involving jamming or spatial selectivity with multiple antennas have been proposed as alternatives~\cite{6739367,7081071}.

Reconfigurable intelligent surfaces (RIS) have recently been proposed as a tool for enhancing wireless security at the physical layer. An RIS is an array of passive elements whose reflection coefficients can be adjusted by means of control signals (e.g., a biasing voltage on a varactor diode) fed to each element. With sufficient channel state information (CSI), the signals reflected by the RIS can provide ``virtual'' line-of-sight paths around blockages, they can be constructively superimposed with direct path signals to enhance the desired signal power, or they can be destructively combined with the direct paths to mitigate the impact of multiuser interference~\cite{9475160,9086766,9424177}. Using the same principles, recent research has also demonstrated that RIS can be an effective auxiliary means to resist physical layer threats in wireless systems, including jamming attacks~\cite{9198898,10143420,9779086,9439833,9385957,10175566}. 

However, RIS technology can also be used for malicious purposes when deployed and controlled by an adversary. This can include an attack on the controller of a legitimate RIS, causing it to behave in an unwanted way \cite{9941040}, or placing an RIS to either increase leakage of a desired signal towards an eavesdropper, or increase the jamming power at certain receivers \cite{wei2023metasurface,9789438,9605003,10073942}. In another type of attack, an adversary can control an RIS in such a way that the direct path signal is destructively combined with the RIS reflection to reduce the signal-to-noise ratio (SNR) or completely cancel the signal at a receiver \cite{9112252}. The above mentioned methods require that the adversary possess CSI for the links it intends to attack, and thus are difficult to implement in practice. Other methods that do not require CSI have been proposed for disrupting time-division duplex (TDD) multiuser multi-input single-output (MU-MISO) systems, in which uplink pilot data is used to estimate the CSI for the design of downlink precoders. For example, in~\cite{9186205}, a pilot contamination attack is implemented on a two-cell TDD MISO system, where an adversarial RIS employs one set of (random) phase shifts during the uplink transmission when the channel is estimated, and a different set of phase shifts during the downlink. This degrades the downlink performance since the precoder is no longer optimal for the modified downlink channel.
In \cite{10081025}, the same attack strategy is applied to a MU-MISO system. The downlink performance is severely degraded due to the increased multiuser interference caused by the inaccurate precoder design for the modified downlink channel. A similar approach is used in \cite{10149173}, except that the attacking RIS is silent during the uplink, and then employs time-varying phase shifts during the downlink. The authors refer to these techniques as ``passive jamming'' attacks, since they disrupt communications without expending any power and thus are difficult to detect.

A significant drawback of \cite{9186205,10081025,10149173} is that the malicious RIS must be synchronized with the uplink training and downlink data transmission phases of the MU-MISO network under attack. In addition, rapid changes in the CSI at a rate much faster than the normal channel coherence time can be easily detected. In this paper, we present a novel method that eliminates these drawbacks, while still requiring no CSI and nor active transmissions. The proposed approach relies on the use of a so-called ``non-diagonal'' RIS \cite{9737373,10302331} (ND-RIS), in which the signal received by one RIS element can be reflected through another RIS element, after the application of a controllable phase shift. The term ``non-diagonal'' refers to the fact that, unlike a conventional RIS whose reflection coefficients appear as a diagonal matrix in the cascaded channel model, the architecture of \cite{9737373,10302331} leads to a non-diagonal and non-symmetric phase shift matrix. From an attacker's perspective, the key benefit of this type of RIS is that it leads to non-reciprocal propagation when the phase-shift matrix is non-symmetric. A notional circuit architecture for designing an ND-RIS was presented in \cite{9737373}. The ND-RIS design is related to a recent more broad class of metasurfaces referred to as ``beyond diagonal'' RIS (BD-RIS), which also provides more general structures for the phase shift matrix \cite{9913356,10159457,10187688,10197228,10237233,10316535,10302331} with additional degrees of freedom. While the BD-RIS architectures suggested in this recent work are constrained to be reciprocal (i.e., for use in traditional networks where reciprocity is desired), non-reciprocal versions can be constructed with the addition of isolators or circulators in the BD-RIS impedance network.

The non-reciprocity of an ND-RIS can be exploited by simply placing it in an environment where it will capture sufficient energy in the uplink of a TDD-based MU-MISO system. The ND-RIS will inherently and passively make the uplink channel different from the downlink channel without requiring any CSI nor synchronization with the MU-MISO system, and without continually changing its state. A precoder designed based on the estimated uplink channel will thus be suboptimal for the non-reciprocal downlink channel, and the performance of the MU-MISO system will be degraded due to the resulting increase in multiuser interference. While CSI is not required to launch a non-reciprocity attack such as the one proposed, CSI can be exploited when available to maximize the effectiveness of the attack.
The contributions of the paper are summarized below.
\begin{itemize}
\item We propose a novel Channel Reciprocity AttaCK (CRACK) on TDD-based MU-MISO systems using an ND-RIS. When the ND-RIS is located such that sufficient energy from the users is reflected by the ND-RIS to the base station (BS), the ND-RIS destroys the reciprocity of the channel and leads to a significant increase in multiuser interference. The method is passive (no transmit power required), it can be implemented without any CSI for the legitimate system, it does not require synchronization with the network under attack, and it can remain static without continually changing its state. CRACK is also suitable for communication systems under any type of channel model, and we demonstrate the effectiveness of the CRACK approach using an extensive set of simulation studies. 
\item We conduct a theoretical analysis of the downlink ergodic sum rate achieved by the MU-MISO system under CRACK, for the case where the BS adopts maximum ratio transmit (MRT) beamforming for Rician RIS-user channel models, Rayleigh BS-user channel models, and a line-of-sight (LoS) BS-RIS channel model. Our simulation results demonstrate that the derived expression accurately predicts the achieved ergodic sum rate. In addition, our simulations show that CRACK is even more effective against systems that employ precoding methods such as zero-forcing whose performance is very sensitive to assumptions about the CSI.
\item Using our derived expression for the ergodic sum rate, we design a genetic algorithm (GA)-based optimization method that determines the ND-RIS connections and phase shifts that minimize the sum rate assuming knowledge of the LoS channel components of the RIS-cascaded channels and global statistical CSI of the Rayleigh channel components. Our numerical results demonstrate that when such CSI is available, the algorithm applied together with CRACK can considerably reduce the performance of the MU-MISO network, not only for the MRT precoder but for the zero-forcing (ZF) case as well.
\end{itemize}

The remainder of the paper is organized as follows. In Section~\uppercase\expandafter{\romannumeral2}, the uplink and downlink channels of an MU-MISO system affected by the proposed ND-RIS CRACK are modeled, and the ergodic sum-rate performance metric used to quantify the impact of the attack is defined. Section~\uppercase\expandafter{\romannumeral3} provides a theoretical analysis of the proposed ND-RIS-based CRACK approach and notes some properties based on the derived analysis. Section~\uppercase\expandafter{\romannumeral4} introduces the GA-based optimization algorithm, and Section~\uppercase\expandafter{\romannumeral5} presents and discusses several representative simulation results. Finally, conclusions are drawn in Section~\uppercase\expandafter{\romannumeral6}.

\textit{Notation}: Vectors and matrices are denoted by boldface lower and upper case letters, respectively. The operators $(\cdot)^T$ and $(\cdot)^H$ represent the transpose and Hermitian transpose, respectively. The magnitude of the complex scalar $x$ and the $2$-norm of vector $\mathbf{x}$ are respectively denoted by $|x|$ and $\Vert \mathbf{x}\Vert$. The real part of scalar $x$ is given by ${\rm{Re}}\{x\}$, the trace of a matrix $\mathbf{X}$ is written as ${\rm{Tr}}\left(\mathbf{X}\right)$, and $diag(\mathbf{x})$ denotes a diagonal matrix whose diagonal elements are given by the elements in $\mathbf{x}$. The notation $\left[\mathbf{x}\right]_i$ represents the $i$-th element of vector $\mathbf{x}$ and $\left[\mathbf{X}\right]_{i,j}$ represents the element in row $i$ and column $j$ of matrix $\mathbf{X}$.

\section{System Model}\label{s2}


\begin{figure}[!t]
\begin{center}
\includegraphics[width=2.8 in]{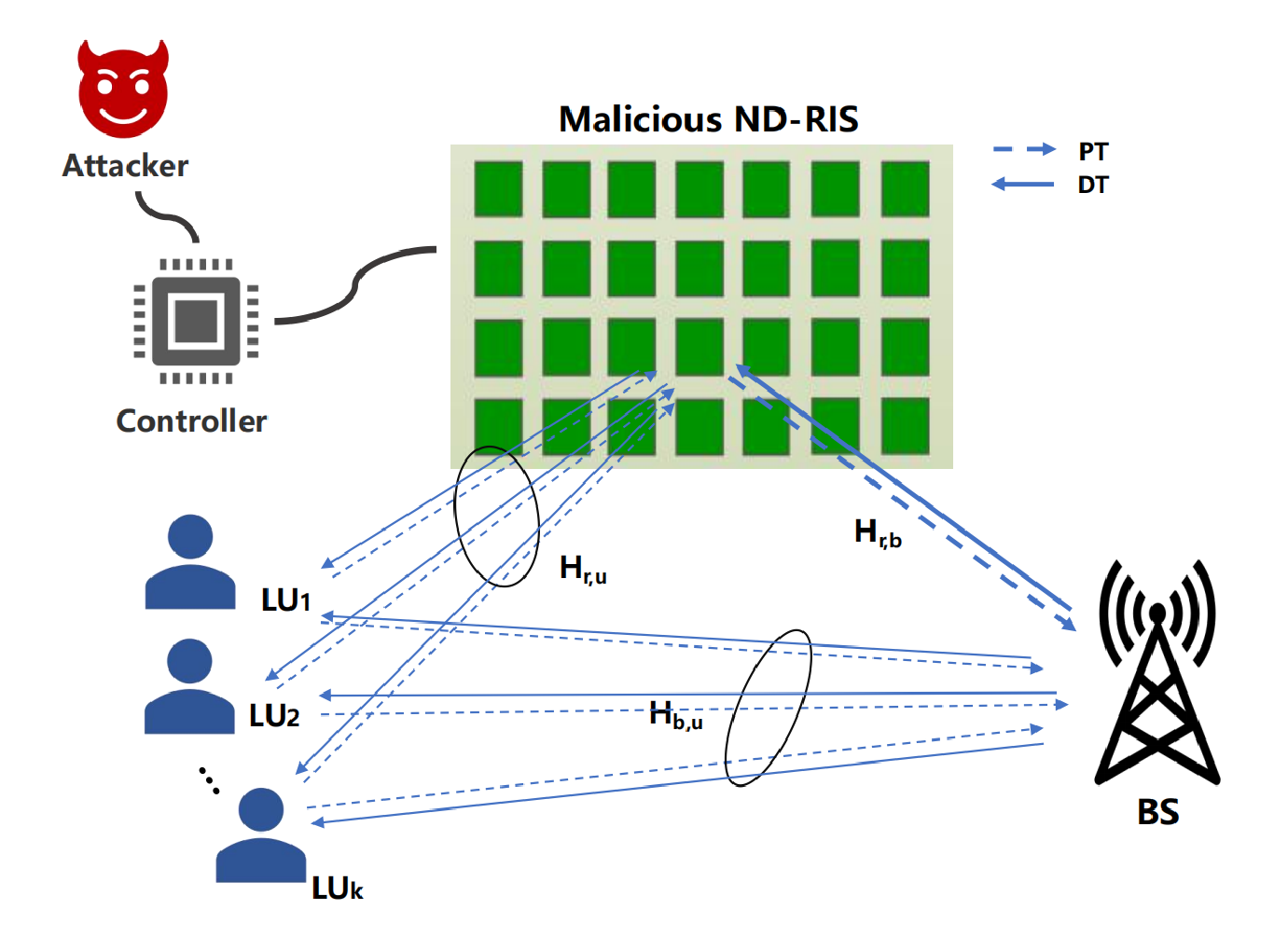}
\end{center}
\vspace{-0.08in}
\caption{Illustration of an MU-MISO system with malicious ND-RIS unknown to the BS that breaks the reciprocity of the uplink pilot transmission (PT) channel (dashed lines) and the downlink data transmission (DT) channel (solid lines).}\label{f1}
\vspace{-0.13in}
\end{figure}
Fig. \ref{f1} illustrates the considered MU-MISO system, where an $M$-antenna BS communicates with $K < M$ single-antenna legitimate users ($\rm{LUs}$) denoted as $\rm{LU_1}, \rm{LU_2},...,\rm{LU_K}$. In addition, an ND-RIS with $N$ elements is present and affects the propagation of signals to/from the BS, although the BS is unaware of its presence. The terms $\mathbf{h}_{k,r}, \mathbf{H}_{r,b}$ and $\mathbf{h}_{k,b}$, respectively denote the flat fading channels between $\rm{LU_k}$ and the ND-RIS, between the BS and the ND-RIS, and between $\rm{LU_k}$ and the BS. The channel $\mathbf{H}_{up}$ is the overall combined channel between the BS and users, including the ND-RIS, while $\mathbf{H}_{down}=\mathbf{H}_{up}^T$ is the channel that the BS {\em assumes} for the downlink channel under reciprocity. 

\subsection{MU-MISO Downlink Precoding}

In a TDD implementation, the BS first estimates the CSI using uplink pilot signals sent by the $\rm{LUs}$, and then, based on the assumption that the uplink and downlink channels are identical, the estimated CSI is used to design a precoder for the downlink. Assuming the pilot signals transmitted by $\rm{LU_k}$ at time $t$ are denoted by $s_{t,k}$, the received signal from all of the $\rm{LUs}$ at the BS is given by
\begin{equation}\label{e1}
\mathbf{y}_{t}=\sum_{k=1}^K \sqrt{p_k}\mathbf{h}_{k,b}^*s_{t,k}+\mathbf{n}_t,
\end{equation}
where $p_k$ is the power of the pilot signal assuming $E(|s_{t,k}|^2)=1$ and $\mathbf{n}_t$ denotes the receiver noise. Due to the existence of the ND-RIS, the uplink channel for $\rm{LU_k}$ is given by $\mathbf{h}^*_{k,b}=\mathbf{H}_{r,b}\mathbf{\Phi}^* \mathbf{h}_{k,r}+\mathbf{h}_{k,b}$, which includes the contribution of the direct path $\mathbf{h}_{k,b}$ and the cascaded ND-RIS path, where $\mathbf{\Phi}^*$ represents the impact of the ND-RIS. Based on the received pilot signals, the channels for all the users can be estimated by the BS, and we will assume that the overall uplink channel $\mathbf{H}_{up}=[\mathbf{h}^*_{1,b},\cdots,\mathbf{h}^*_{K,b}] = \mathbf{H}_{r,b}\mathbf{\Phi}^* \mathbf{H}_{r,u}+\mathbf{H}_{b,u}\in \mathbb{C}^{M \times K}$ is known perfectly at the BS, where $\mathbf{H}_{r,u}=[\mathbf{h}_{1,r},\cdots,\mathbf{h}_{K,r}]$ and $\mathbf{H}_{b,u}=[\mathbf{h}_{1,b},\cdots,\mathbf{h}_{K,b}]$ are the combined user-RIS channel and combined user-BS direct channel, respectively. {\em It is important to note that the BS can only estimate the overall channel $\mathbf{H}_{up}$ and not its constituent parts $\mathbf{H}_{b,u}$ and $\mathbf{H}_{r,b}\mathbf{\Phi}^* \mathbf{H}_{r,u}$, which are not identifiable in this application}. In a system with a cooperative RIS, these terms can be estimated separately, but only because the RIS changes its response in a known way during the training. In our application, the ND-RIS response remains constant and is unknown to the BS. This prevents the BS from somehow nulling out the signal from the ND-RIS.

Based on channel reciprocity, the BS will assume that the downlink channel is $\mathbf{H}_{down}=\mathbf{H}_{up}^T = \mathbf{H}_{r,u}^T\mathbf{\Phi}^{*T} \mathbf{H}_{r,b}^T+\mathbf{H}_{b,u}^T$. However, the {\em actual} downlink channel is given by $\mathbf{H}_{down}^*= \mathbf{H}_{r,u}^T\mathbf{\Phi}^* \mathbf{H}_{r,b}^T+\mathbf{H}_{b,u}^T$. For a conventional RIS, $\mathbf{H}_{down}$ and $\mathbf{H}^*_{down}$ are equal since the phase shift matrix of a conventional RIS is diagonal, and hence symmetric. However, this is not the case for an ND-RIS. In particular, the ND-RIS phase shift matrix can be denoted as $\mathbf{\Phi}^* =\mathbf{J} \mathbf{\Phi}$, where $\mathbf{\Phi}=\text{diag} \, [\theta_1,\cdots,\theta_N]\in \mathbb{C}^{N \times N}$ is a diagonal matrix with $\theta_i=\beta_i e^{j\phi_i}$ and $\mathbf{J}$ is a permutation matrix. In the application considered here, the goal is to maximize reflection from the malicious ND-RIS, and so we set $\beta_i=1$ in the sequel. Consequently $\theta_i$ has unit modulus with $\phi_i \in [0,2\pi)$. The permuation $\mathbf{J}$ scrambles the order of the elements of a given $1\times N$ vector $\mathbf{x}$ such that $\mathbf{x}\mathbf{J}=[x_{[1]},x_{[2]},\dots,x_{[N]}]$, and we can consider the permutation to be a mapping from one set of ordered indices into another: $\mathbf{J}: n\to [n]$. In general, $\mathbf{\Phi}^{*T} = \mathbf{\Phi} \mathbf{J} \ne  \mathbf{J \Phi} = \mathbf{\Phi}^{*}$, which destroys the channel reciprocity and enables CRACK.

The BS designs the downlink precoder $\mathbf{W}=[\mathbf{w}_1, \cdots, \mathbf{w}_K]$ based on the assumed downlink channel $\mathbf{H}_{down}$. For example, for the case of a maximum ratio transmit (MRT) beamformer, the precoder for $\rm{LU_k}$ is the conjugate transpose of the downlink channel for $\rm{LU_k}$, given by $\mathbf{w}_k=((\mathbf{h}^*_{k,b})^T)^H\in \mathbb{C}^{M \times 1}$. The overall precoder is therefore given by $\mathbf{W}=\mathbf{H}_{down}^H=\left(\mathbf{H}_{up}^T\right)^H$, and the signal vector received by the $K$ users is given by 
\begin{equation}\label{e2}
\mathbf{y}=\mathbf{H}^*_{down}\mathbf{W}\mathbf{D}\mathbf{s}+\mathbf{n}_d,
\end{equation}
where $\mathbf{D}=\text{diag}(\sqrt{P_1}, \sqrt{P_2},...,\sqrt{P_K})$ is a ${K \times K}$ diagonal power allocation matrix and $\mathbf{n}_d$ denotes the received noise vector, whose elements are modeled as independent and identically distributed (i.i.d.) Gaussian random variables with zero mean and variance $\sigma^2$. 

\subsection{Novelty of CRACK Approach}
As with the RIS attacks proposed in \cite{9186205,10081025,10149173}, the proposed CRACK approach can be implemented in an entirely passive way, without transmission of any signal, and in general it does not require any CSI for the legitimate system. However, CRACK has important advantages compared with these other methods. The approaches in \cite{9186205,10081025,10149173} rely on achieving $\mathbf{H}_{down} \ne \mathbf{H}_{up}^T$ by making the diagonal phase shift matrix of a conventional RIS change from one state during uplink training to another state during downlink precoding (or multiple times during each). This type of attack causes what is called ``active channel aging'' (ACA), which shortens the channel coherence time and creates a more rapidly varying channel than what the BS would normally assume. ACA attacks have two significant drawbacks compared with CRACK. First, an ACA attacker must be synchronized with the uplink and downlink transmission times of the legitimate TDD system, and second, such rapid variations in the channel are quite easy to detect. In the proposed CRACK approach, the channel's coherence time is not modified, and the presence of the ND-RIS is not revealed by any physically observable phenomenon, other than degraded performance at the users. It is also important to point out that, although we focus on the flat-fading case here for simplicity, CRACK is agnostic to the assumed channel model. It is equally effective for any type of flat- or frequency-selective fading model, and also for both near- and far-field scenarios

\subsection{Achievable Ergodic Rate for MRT Precoding}
As explained above, the actual channel from the BS to $\rm{LU_k}$, which we denote as the $1\times M$ vector $\mathbf{h}^*_{b,k}$, is given by $\mathbf{h}^*_{b,k}=\mathbf{h}^T_{k,r}\mathbf{\Phi}^* \mathbf{H}^T_{r,b}+\mathbf{h}^T_{k,b}$. Thus, the received downlink signal at $\rm{LU_k}$ is written as
\begin{equation}\label{e4}
y_k=\sqrt{P_k}\mathbf{h}^*_{b,k} \mathbf{w}_{k}s_k+\mathbf{h}^*_{b,k}\sum_{i=1, i\neq k}^{K}\sqrt{P_i}\mathbf{w}_{i}s_i + n_k.
\end{equation}
For MRT, the precoding vector for $\rm{LU_k}$ is 
\[
\mathbf{w}_{k}=((\mathbf{h}^*_{k,b})^T)^H =(\mathbf{H}^T_{r,b})^H\widetilde{\mathbf{\Phi}} (\mathbf{h}^T_{k,r})^H+(\mathbf{h}^T_{k,b})^H \; ,
\]
where $\widetilde{\mathbf{\Phi}}=((\mathbf{\Phi}^*)^T )^H=\mathbf{J} \mathbf{\Phi}^H $ is the conjugate of $\mathbf{\Phi}^*$. For notational simplicity, we write $\mathbf{h}_{r,k}=\mathbf{h}^T_{k,r}$, $\mathbf{H}_{b,r}=\mathbf{H}^T_{r,b}$, and $\mathbf{h}_{b,k}=\mathbf{h}^T_{k,b}$, so that the actual received signal at $\rm{LU_k}$ can be formulated as
\begin{equation}\label{e5}
\begin{aligned}
y_k&=\sqrt{P_k}\mathbf{h}^*_{b,k} (\mathbf{H}^H_{b,r}\widetilde{\mathbf{\Phi}} \mathbf{h}^H_{r,k}+\mathbf{h}^H_{b,k})s_k \\&+\mathbf{h}^*_{b,k}\sum_{i=1, i\neq k}^{K}\sqrt{P_i} (\mathbf{H}^H_{b,r}\widetilde{\mathbf{\Phi}} \mathbf{h}^H_{r,i}+\mathbf{h}^H_{b,i})s_i  + n_k.
\end{aligned}
\end{equation}

The downlink signal-to interference-plus-noise ratio (SINR) for $\rm{LU_k}$ is then given by (\ref{e6}).
\begin{figure*}
\begin{equation}\label{e6}
\eta_k=\frac{P_k|(\mathbf{h}_{r,k}\mathbf{\Phi}^* \mathbf{H}_{b,r}+\mathbf{h}_{b,k})(\mathbf{H}^H_{b,r}\widetilde{\mathbf{\Phi}} \mathbf{h}^H_{r,k}+\mathbf{h}^H_{b,k})|^2}{\sum_{i=1, i\neq k}^{K}P_i|(\mathbf{h}_{r,k}\mathbf{\Phi}^* \mathbf{H}_{b,r}+\mathbf{h}_{b,k})(\mathbf{H}^H_{b,r}\widetilde{\mathbf{\Phi}} \mathbf{h}^H_{r,i}+\mathbf{h}^H_{b,i})|^2+\sigma^2}.
\end{equation}
\hrulefill
\end{figure*}
Assuming that all channels are ergodic, the achievable ergodic rate for $\rm{LU_k}$ can be written as (\ref{e7}),
\begin{figure*}
\begin{equation}\label{e7}
\begin{aligned}
r_k^* &= \mathbb{E}[\log(1+\eta_k)]= \mathbb{E}\left[\log\left(1+\frac{P_k|(\mathbf{h}_{r,k}\mathbf{\Phi}^* \mathbf{H}_{b,r}+\mathbf{h}_{b,k})(\mathbf{H}^H_{b,r}\widetilde{\mathbf{\Phi}} \mathbf{h}^H_{r,k}+\mathbf{h}^H_{b,k})|^2}{\sum_{i=1, i\neq k}^{K}P_i|(\mathbf{h}_{r,k}\mathbf{\Phi}^* \mathbf{H}_{b,r}+\mathbf{h}_{b,k})(\mathbf{H}^H_{b,r}\widetilde{\mathbf{\Phi}} \mathbf{h}^H_{r,i}+\mathbf{h}^H_{b,i})|^2+\sigma^2}\right)\right]\\
&\overset{(a)}{\approx}\log\left(1+\frac{\mathbb{E}\left[P_k|(\mathbf{h}_{r,k}\mathbf{\Phi}^* \mathbf{H}_{b,r}+\mathbf{h}_{b,k})(\mathbf{H}^H_{b,r}\widetilde{\mathbf{\Phi}} \mathbf{h}^H_{r,k}+\mathbf{h}^H_{b,k})|^2\right]}{\mathbb{E}\left[\sum_{i=1, i\neq k}^{K}P_i|(\mathbf{h}_{r,k}\mathbf{\Phi}^* \mathbf{H}_{b,r}+\mathbf{h}_{b,k})(\mathbf{H}^H_{b,r}\widetilde{\mathbf{\Phi}} \mathbf{h}^H_{r,i}+\mathbf{h}^H_{b,i})|^2\right]+\sigma^2}\right),
\end{aligned}
\end{equation}
\hrulefill
\end{figure*}
where the approximation in step $(a)$ has been shown to be accurate for massive MIMO systems \cite{6816003,9355404,9935294,10034679,9973364,9973349}.

\section{Achievable Ergodic Rate Analysis Under ND-RIS CRACK}\label{s3}
In this section, we derive an approximate closed-form
expression for the achievable ergodic rate of the MU-MISO system under the ND-RIS-based channel reciprocity attack for a fixed choice of $\mathbf{\Phi}^*$. The theoretical results reveal the impact of various system variables, including the number of antennas at the BS, the number of reflecting elements at the RIS, the transmit power, and the path loss exponent. We will also present asymptotic expressions for some special cases. For more compact notation, (\ref{e7}) can be converted into (\ref{e13}), 
\begin{figure*}
\begin{equation}\label{e13}
\begin{aligned}
r_k^* &\approx\log\left(1+\frac{\mathbb{E}\left[P_k|(\mathbf{g}_{k}+\mathbf{h}_{b,k})(\mathbf{q}_{k}+\mathbf{h}^H_{b,k})|^2\right]}{\mathbb{E}\left[\sum_{i=1, i\neq k}^{K}P_i|(\mathbf{g}_{k}+\mathbf{h}_{b,k})(\mathbf{q}_{i}+\mathbf{h}^H_{b,i})|^2\right]+\sigma^2}\right)\\
&\approx\log\left(1+\frac{\mathbb{E}\left[P_k|(\mathbf{g}_{k}+\mathbf{h}_{b,k})(\mathbf{q}_{k}+\mathbf{h}^H_{b,k})|^2\right]}{\sum_{i=1, i\neq k}^{K}P_i \left(\mathbb{E}\left[|\mathbf{g}_{k}\mathbf{q}_{i}|^2\right]+\mathbb{E}\left[|\mathbf{g}_{k}\mathbf{h}^H_{b,i}|^2\right] +\mathbb{E}\left[|\mathbf{h}_{b,k}\mathbf{q}_{i}|^2\right] + \mathbb{E}\left[|\mathbf{h}_{b,k}\mathbf{h}^H_{b,i}|^2\right] \right)+\sigma^2}\right),
\end{aligned}
\end{equation}
\hrulefill
\end{figure*}
where $\mathbf{g}_{k}=(\mathbf{h}_{r,k}\mathbf{\Phi}^* \mathbf{H}_{b,r})$, $\mathbf{q}_{k}=(\mathbf{H}^H_{b,r}\widetilde{\mathbf{\Phi}} \mathbf{h}^H_{r,k})$, $\mathbf{q}_{i}=(\mathbf{H}^H_{b,r}\widetilde{\mathbf{\Phi}} \mathbf{h}^H_{r,i})$. After describing the assumed channel models in the next section, we evaluate the expectations in~\eqref{e13} and derive a closed-form expression for the ergodic rate in Section~\ref{sec:aer}.

\subsection{Assumed Channel Models}
For the analysis that follows, we will assume the following common channel models: Rayleigh fading model for all BS-User direct links, Rician model for User-RIS links, Line-of-sight (LoS) model for the BS-RIS link. Specifically, the channel between $\rm{LU_k}$ and BS is expressed as
\begin{equation}\label{e001}
\mathbf{h}_{k,b}=\sqrt{\alpha_{k,b}}\widetilde{\mathbf{h}}_{k,b},
\end{equation}
where  $\alpha_{k,b}=\rho d_{k,b}^{-\iota_{k,b}}$ is the distance-dependent large-scale path-loss factor, $\rho$ is the path loss at the reference distance $d_0=1$m, $d_{k,b}$ is the distance between the  $\rm{LU_k}$ and BS, and $\iota_{k,b}$ is the path loss exponent of the  $\rm{LU_k}$-to-BS link. The elements of $\widetilde{\mathbf{h}}_{k,b}  \in \mathbb{C}^{M \times 1}$ follow an i.i.d. complex Gaussian distribution with zero mean and unit variance. The channel $\mathbf{h}_{k,r}$ between user $k$ and the ND-RIS is described as
\begin{equation}\label{e9}
\mathbf{h}_{k,r}=\sqrt{\alpha_{k,r}}\left(\sqrt{\frac{\kappa_{k,r}}{1+\kappa_{k,r}}} \overline{\mathbf{h}}_{k,r}+\sqrt{\frac{1}{1+\kappa_{k,r}}} \widetilde{\mathbf{h}}_{k,r}\right) , 
\end{equation}
where $\kappa_{k,r}$ is the Rician factor, $\overline{\mathbf{h}}_{k,r}$ is the line-of-sight (LoS) channel component which is assumed to satisfy $\overline{\mathbf{h}}_{k,r}^H\overline{\mathbf{h}}_{k,r}=N$, and the elements of the non-LoS (NLoS) component $\widetilde{\mathbf{h}}_{k,r}$ follow an i.i.d. complex Gaussian distribution with zero mean and unit variance. Finally, the LoS BS-RIS channel $\mathbf{H}_{r,b}$ is described by the deterministic rank-one matrix
\begin{equation}\label{e10}
\mathbf{H}_{r,b}=\sqrt{\alpha_{r,b}}\overline{\mathbf{H}}_{r,b}=\sqrt{\alpha_{r,b}}\mathbf{a}_N \mathbf{a}_M^H,
\end{equation}
where $\mathbf{a}_N^H\mathbf{a}_N=N$ and $\mathbf{a}_M^H\mathbf{a}_M=M$ represent the corresponding array response vectors.

In the simulations, we will also include results for the more general case where the BS-RIS link is Rician, in which case the channel model becomes
\begin{equation}\label{e11}
\mathbf{H}_{r,b}=\sqrt{\alpha_{r,b}}\left(\sqrt{\frac{\kappa_{r,b}}{1+\kappa_{r,b}}} \overline{\mathbf{H}}_{r,b}+\sqrt{\frac{1}
{1+\kappa_{r,b}}} \widetilde{\mathbf{H}}_{r,b}\right),
\end{equation}
where $\kappa_{r,b}$ is the Rician factor, and the elements of the non-LoS (NLoS) component $\widetilde{\mathbf{H}}_{r,b} \in \mathbb{C}^{M \times N}$ follow an i.i.d. complex Gaussian distribution with zero mean and unit variance. 


\subsection{Achievable Ergodic Rate}\label{sec:aer}

The following theorem provides an approximate closed-form expression for the downlink ergodic rate of the system under CRACK for the case of an ND-RIS configured with phase response $\mathbf{\Phi}^*$.
\begin{theorem}\label{t1}
Under ND-RIS CRACK, assuming a LoS BS-RIS channel, Rayleigh BS-user channels and Rician RIS-user channels, the downlink ergodic rate for $\rm{LU_k}$ can be approximated as
\begin{equation}\label{e14}
\begin{aligned}
r_k^* &\approx\log\left(1+\frac{P_k\mathscr{E}_k^{signal}(\mathbf{\Phi}^*)}{\sum_{i=1, i\neq k}^{K}P_i\mathscr{I}_{k,i}(\mathbf{\Phi}^*)+\sigma^2}\right), 
\end{aligned}
\end{equation}
where $\mathscr{E}_k^{signal}(\mathbf{\Phi}^*)$ and $\mathscr{I}_{k,i}(\mathbf{\Phi}^*)$ are respectively given as (\ref{e17}) and (\ref{e18})
\begin{figure*}
\begin{equation}\label{e17}
\begin{aligned}
\mathscr{E}_k^{signal}(\mathbf{\Phi}^*) &= \left(\mathbb{E}\left[\left|\mathbf{g}_{k}\mathbf{q}_{k}\right|^2\right]+ 2\mathbb{E}\left[
 {\rm{Re}}\left\{\mathbf{g}_{k}\mathbf{q}_{k}\mathbf{h}_{b,k}\mathbf{h}^H_{b,k}\right\}\right] +\mathbb{E}\left[\left|\mathbf{g}_{k}\mathbf{h}^H_{b,k}\right|^2\right] \right. \\ 
 & \qquad\left. +2\mathbb{E}\left[{\rm{Re}}\left\{\mathbf{g}_{k}\mathbf{h}^H_{b,k}\mathbf{q}^H_{k}\mathbf{h}^H_{b,k}\right\}\right]+ \mathbb{E}\left[\left|\mathbf{h}_{b,k}\mathbf{q}_{k}\right|^2\right] +\mathbb{E}\left[\left|\mathbf{h}_{b,k}\mathbf{h}^H_{b,k}\right|^2\right] \right)\\
 &= c^2_k \left(\kappa^2_{k,r}M^2\left|f_k(\mathbf{\Phi}^*)\right|^2 \left|f^*_k(\mathbf{\Phi}^*)\right|^2+\kappa_{k,r}M^2 N\left|f_k(\mathbf{\Phi}^*)\right|^2 +\kappa_{k,r}M^2N \left|f^*_k(\mathbf{\Phi}^*)\right|^2 \right. \\
& \qquad \left.+M^2\Psi_1(\mathbf{\Phi}^*)\right) 
+2c_k \alpha_{k,b}M^2\Bigl(\kappa_{k,r} {\rm{Re}}\left\{f_k(\mathbf{\Phi}^*)f^*_k(\mathbf{\Phi}^*) \right\}+{\rm{Re}}\left\{{\rm{Tr}}\left(\mathbf{t}_1(\mathbf{\Phi}^*)\right)\right\}\Bigr)  \\ & \qquad+\alpha^2_{k,b}(M^2+M) + c_k\alpha_{k,b}\left(\kappa_{k,r} M\left|f_k(\mathbf{\Phi}^*)\right|^2+MN\right) +c_k\alpha_{k,b}\left(\kappa_{k,r} M\left|f^*_k(\mathbf{\Phi}^*)\right|^2+MN\right),
\end{aligned}
\end{equation}
\hrulefill
\end{figure*}
\begin{figure*}
\begin{equation}\label{e18}
\begin{aligned}
\mathscr{I}_{k,i}(\mathbf{\Phi}^*) &=\mathbb{E}\left[|\mathbf{g}_{k}\mathbf{q}_{i}|^2\right]+\mathbb{E}\left[|\mathbf{g}_{k}\mathbf{h}^H_{b,i}|^2\right] +\mathbb{E}\left[|\mathbf{h}_{b,k}\mathbf{q}_{i}|^2\right] + \mathbb{E}\left[|\mathbf{h}_{b,k}\mathbf{h}^H_{b,i}|^2\right] \\
&=c_k c_i \left(\kappa_{k,r}\kappa_{i,r}M^2\left|f_k(\mathbf{\Phi}^*)\right|^2 \left|f_i(\mathbf{\Phi}^*)\right|^2+\kappa_{k,r}M^2 N\left|f_k(\mathbf{\Phi}^*)\right|^2+\kappa_{i,r}M^2N \left|f_i(\mathbf{\Phi}^*)\right|^2 \right. \\
& \qquad \left. +M^2 {\rm{Tr}}\left\{\Psi_2\Psi^H_2\right\}\Bigr)+c_k\alpha_{i,b}\left(\kappa_{k,r} M\left|f_k(\mathbf{\Phi}^*)\right|^2+MN\right)  \right.  +c_i\alpha_{k,b}\left(\kappa_{i,r} M\left|f_i(\mathbf{\Phi}^*)\right|^2+MN\right)+\alpha_{k,b}\alpha_{i,b}M,
\end{aligned}
\end{equation}
\hrulefill
\end{figure*}
with various constant terms defined as follows:
\begin{equation}
    c_k\triangleq\frac{\alpha_{r,b}\alpha_{k,r}}{(1+\kappa_{k,r})} \quad c_i\triangleq\frac{\alpha_{r,b}\alpha_{i,r}}{(1+\kappa_{i,r})} \quad f_k(\mathbf{\Phi}^*)\triangleq \overline{\mathbf{h}}_{r,k}\mathbf{\Phi}^*\mathbf{a}_N
\end{equation}
\begin{equation}
\begin{aligned}
    f^*_k(\mathbf{\Phi}^*)\triangleq \mathbf{a}^H_N \widetilde{\mathbf{\Phi}}\overline{\mathbf{h}}^H_{r,k} \qquad f_i(\mathbf{\Phi}^*)\triangleq\mathbf{a}^H_N \widetilde{\mathbf{\Phi}}\overline{\mathbf{h}}^H_{r,i},
    \end{aligned}
\end{equation}
\begin{equation}
    \mathbf{t}_1(\mathbf{\Phi}^*)\triangleq\mathbf{\Phi}^*\mathbf{a}_N \mathbf{a}_N^H \widetilde{\mathbf{\Phi}} \qquad \quad \Psi_2\triangleq \mathbf{a}_N \mathbf{a}_N^H \mathbf{J}.
\end{equation}
\begin{equation}\label{e005}
\begin{aligned}
\Psi_1(\mathbf{\Phi}^*)&\triangleq \sum_{i=1}^{N}2|\left[\mathbf{t}_1(\mathbf{\Phi}^*)\right]_{i,i}|^2+\sum_{i=1}^{N}\sum_{j=1, j\neq i}^{N}|\left[\mathbf{t}_1(\mathbf{\Phi}^*)\right]_{i,j}|^2\\&+\sum_{i=1}^{N-1}\sum_{j=i+1}^{N}2{\rm{Re}}\left\{\left[\mathbf{t}_1(\mathbf{\Phi}^*)\right]_{i,i}\left[\mathbf{t}_1(\mathbf{\Phi}^*)\right]^H_{j,j}\right\},
\end{aligned}
\end{equation} 
\end{theorem}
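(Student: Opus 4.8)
The plan is to begin from the decomposed form~\eqref{e13} and evaluate, one at a time, the six expectations making up $\mathscr{E}_k^{signal}(\mathbf{\Phi}^*)$ in~\eqref{e17} and the four making up $\mathscr{I}_{k,i}(\mathbf{\Phi}^*)$ in~\eqref{e18}, where the only random quantities are the NLoS parts of the Rician RIS-user channels $\mathbf{h}_{r,k}$ and the Rayleigh direct channels $\mathbf{h}_{b,k}$, while the LoS RIS-user components and the rank-one BS-RIS matrix are deterministic. The single most useful preliminary step is to exploit the rank-one structure of the LoS BS-RIS channel: since $\mathbf{H}_{b,r}\mathbf{H}_{b,r}^H = \alpha_{r,b} M\,\mathbf{a}_N\mathbf{a}_N^H$, every product of the form $\mathbf{g}_k\mathbf{q}_i = \mathbf{h}_{r,k}\mathbf{\Phi}^*\mathbf{H}_{b,r}\mathbf{H}_{b,r}^H\widetilde{\mathbf{\Phi}}\mathbf{h}_{r,i}^H$ collapses to the scalar quadratic form $\alpha_{r,b}M\,\mathbf{h}_{r,k}\mathbf{t}_1(\mathbf{\Phi}^*)\mathbf{h}_{r,i}^H$, and the norms $\|\mathbf{g}_k\|^2$ and $\|\mathbf{q}_k\|^2$ reduce to $\alpha_{r,b}M\,|\mathbf{h}_{r,k}\mathbf{\Phi}^*\mathbf{a}_N|^2$ and $\alpha_{r,b}M\,|\mathbf{a}_N^H\widetilde{\mathbf{\Phi}}\mathbf{h}_{r,k}^H|^2$. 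Substituting the Rician split $\mathbf{h}_{r,k}=\sqrt{\alpha_{k,r}/(1+\kappa_{k,r})}(\sqrt{\kappa_{k,r}}\,\overline{\mathbf{h}}_{r,k}+\widetilde{\mathbf{h}}_{r,k})$ then turns the deterministic LoS contractions into exactly $f_k(\mathbf{\Phi}^*)$ and $f^*_k(\mathbf{\Phi}^*)$, while the prefactors gather into $c_k$.

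With this reduction I would first justify the structure of the six-term expansion in the first line of~\eqref{e17}. Writing $A=\mathbf{g}_k\mathbf{q}_k$, $B=\mathbf{g}_k\mathbf{h}_{b,k}^H$, $C=\mathbf{h}_{b,k}\mathbf{q}_k$ and $D=\mathbf{h}_{b,k}\mathbf{h}_{b,k}^H$, the squared magnitude $|A+B+C+D|^2$ has ten terms. Because $\mathbf{h}_{b,k}$ is zero-mean and independent of everything else, the four cross terms $A\overline{B}$, $A\overline{C}$, $B\overline{D}$, $C\overline{D}$ carry an odd number of direct-channel factors and vanish in expectation; of the two remaining cross terms, $2\,{\rm Re}\{B\overline{C}\}$ vanishes as well by circular symmetry (it contains $\mathbf{h}_{b,k}^H$ twice, whose pseudo-covariance is zero), leaving the four squared terms together with $2\,{\rm Re}\{A\overline{D}\}$. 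The three ``second-order'' expectations $\mathbb{E}[|B|^2]$, $\mathbb{E}[|C|^2]$, $\mathbb{E}[|D|^2]$ and the cross term $2\mathbb{E}[{\rm Re}\{A\overline{D}\}]$ are then routine: they follow from $\mathbb{E}[\mathbf{h}_{b,k}^H\mathbf{h}_{b,k}]=\alpha_{k,b}\mathbf{I}_M$, the complex-Gaussian fourth moment $\mathbb{E}[\|\mathbf{h}_{b,k}\|^4]=\alpha_{k,b}^2(M^2+M)$, and the identities $\mathbf{a}_N^H\mathbf{\Phi}^{*H}\mathbf{\Phi}^*\mathbf{a}_N=\mathbf{a}_N^H\widetilde{\mathbf{\Phi}}\widetilde{\mathbf{\Phi}}^H\mathbf{a}_N=N$ (which hold because $\mathbf{\Phi}^{*H}\mathbf{\Phi}^*=\widetilde{\mathbf{\Phi}}\widetilde{\mathbf{\Phi}}^H=\mathbf{I}$ for the unit-modulus permuted phase matrix); these yield the $MN$, $\alpha_{k,b}^2(M^2+M)$, $\kappa_{k,r}M|f_k|^2$, and ${\rm Re}\{{\rm Tr}(\mathbf{t}_1)\}$ contributions, the last arising from $\mathbb{E}[\widetilde{\mathbf{h}}_{r,k}\mathbf{t}_1\widetilde{\mathbf{h}}_{r,k}^H]={\rm Tr}(\mathbf{t}_1)$.

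The main obstacle is the single genuinely fourth-order term $\mathbb{E}[|A|^2]=\alpha_{r,b}^2M^2\,\mathbb{E}[|\mathbf{h}_{r,k}\mathbf{t}_1\mathbf{h}_{r,k}^H|^2]$, in which the \emph{same} random vector $\mathbf{h}_{r,k}$ appears four times, so after the Rician split the pure-NLoS piece is a fourth moment of $\widetilde{\mathbf{h}}_{r,k}$. Here I would apply the complex-Gaussian Isserlis/Wick identity $\mathbb{E}[\widetilde{h}_a\widetilde{h}_b\widetilde{h}_c^*\widetilde{h}_d^*]=\delta_{ac}\delta_{bd}+\delta_{ad}\delta_{bc}$ to obtain $\mathbb{E}[|\widetilde{\mathbf{h}}_{r,k}\mathbf{t}_1\widetilde{\mathbf{h}}_{r,k}^H|^2]=|{\rm Tr}(\mathbf{t}_1)|^2+{\rm Tr}(\mathbf{t}_1\mathbf{t}_1^H)$, which upon writing out the diagonal and off-diagonal index sums is precisely $\Psi_1(\mathbf{\Phi}^*)$ in~\eqref{e005}: the $2|[\mathbf{t}_1]_{i,i}|^2$ and $2\,{\rm Re}\{[\mathbf{t}_1]_{i,i}[\mathbf{t}_1]_{j,j}^H\}$ pieces come from $|{\rm Tr}(\mathbf{t}_1)|^2$ plus the coincident-index part of ${\rm Tr}(\mathbf{t}_1\mathbf{t}_1^H)$, and the $|[\mathbf{t}_1]_{i,j}|^2$ piece from the distinct-index part. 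The LoS$\times$NLoS cross contributions in $\mathbb{E}[|A|^2]$ must also be tracked; the surviving ones reduce, via the identity $\mathbf{t}_1\mathbf{t}_1^H=N\,\mathbf{\Phi}^*\mathbf{a}_N\mathbf{a}_N^H\mathbf{\Phi}^{*H}$, to the $\kappa_{k,r}M^2N|f_k|^2$ and $\kappa_{k,r}M^2N|f^*_k|^2$ terms. The delicate point throughout is to keep the non-symmetric permutation structure $\mathbf{\Phi}^*=\mathbf{J}\mathbf{\Phi}$ and $\widetilde{\mathbf{\Phi}}=\mathbf{J}\mathbf{\Phi}^H$ intact, since it is exactly the asymmetry carried by $\mathbf{J}$ that keeps $\mathbf{t}_1$ non-Hermitian and prevents the diagonal and off-diagonal sums in $\Psi_1$ from recombining into a single trace.

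Finally, the interference expectations in~\eqref{e18} use the same expansion but are easier, because for $i\neq k$ the vector $\mathbf{q}_i$ depends on $\mathbf{h}_{r,i}$, which is independent of the $\mathbf{h}_{r,k}$ inside $\mathbf{g}_k$. Consequently $\mathbb{E}[|\mathbf{g}_k\mathbf{q}_i|^2]=\alpha_{r,b}^2M^2\,\mathbb{E}[|\mathbf{h}_{r,k}\mathbf{t}_1\mathbf{h}_{r,i}^H|^2]$ factors into a product of two second moments rather than a fourth moment; taking the expectation over $\mathbf{h}_{r,i}$ first collapses the NLoS part to the single trace ${\rm Tr}(\mathbf{t}_1\mathbf{t}_1^H)={\rm Tr}\{\Psi_2\Psi_2^H\}$, with no $\Psi_1$ appearing, while the LoS parts reproduce the $\kappa$-weighted $|f_k|^2$ and $|f_i|^2$ terms. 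The remaining three interference expectations are second-order and handled exactly as in the signal case. Collecting all surviving terms, factoring out $c_k$, $c_i$ and the path-loss constants, and substituting back into~\eqref{e14} yields the claimed expressions~\eqref{e17}--\eqref{e18}, with the overall ratio-of-expectations accuracy inherited from the massive-MIMO approximation (a) in~\eqref{e7}.
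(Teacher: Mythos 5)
Your proof is correct, and its skeleton is the same as the paper's appendix: the identical six-term expansion of the desired-signal power and four-term expansion of the interference power, the same Rician LoS/NLoS splitting of $\mathbf{g}_k$, $\mathbf{q}_k$, $\mathbf{q}_i$, the same rank-one collapse $\overline{\mathbf{H}}_{b,r}\overline{\mathbf{H}}_{b,r}^H=M\mathbf{a}_N\mathbf{a}_N^H$, and the same zero-mean/odd-moment eliminations of cross terms. You deviate only in the two genuinely delicate sub-steps, and in both cases your argument is cleaner than, and equivalent to, the paper's. First, for the quartic term $\mathbb{E}\left[\left|\widetilde{\mathbf{h}}_{r,k}\mathbf{t}_1(\mathbf{\Phi}^*)\widetilde{\mathbf{h}}_{r,k}^H\right|^2\right]$ the paper simply asserts the value $\Psi_1(\mathbf{\Phi}^*)$ in (\ref{e43}) with no derivation; you obtain it from the complex Wick/Isserlis identity $\mathbb{E}\left[\left|\widetilde{\mathbf{h}}\,\mathbf{t}_1\widetilde{\mathbf{h}}^H\right|^2\right]=\left|{\rm Tr}(\mathbf{t}_1)\right|^2+{\rm Tr}\left(\mathbf{t}_1\mathbf{t}_1^H\right)$, and your index bookkeeping showing that this sum reproduces exactly (\ref{e005}) is right, so your route supplies a justification the paper omits. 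Second, for the cross term $\mathbb{E}\left[{\rm Re}\left\{\mathbf{g}_{k}\mathbf{h}^H_{b,k}\mathbf{q}^H_{k}\mathbf{h}^H_{b,k}\right\}\right]$ the paper devotes an entire subsection to splitting it into four Rician pieces and tracking the permutation mapping ($[p]=q$ versus $[p]\neq q$) entry by entry; you instead note that $\mathbf{h}_{b,k}^H$ appears twice with no conjugate partner, and since $\mathbf{g}_k,\mathbf{q}_k$ are independent of $\mathbf{h}_{b,k}$, conditioning on $\mathbf{h}_{r,k}$ reduces the expectation to sums of pseudo-covariances $\mathbb{E}\left[h_i^*h_j^*\right]=0$ of the circularly symmetric direct channel. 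This one-line argument is valid and subsumes the paper's permutation analysis. All remaining evaluations in your plan (the $MN$ and $\kappa_{k,r}M\left|f_k(\mathbf{\Phi}^*)\right|^2$ second-order terms, $\alpha_{k,b}^2(M^2+M)$, $\alpha_{k,b}\alpha_{i,b}M$, and the factorization of the interference quartic into second moments giving ${\rm Tr}\left\{\Psi_2\Psi_2^H\right\}$ by independence of $\mathbf{h}_{r,k}$ and $\mathbf{h}_{r,i}$) coincide with the paper's computations in (\ref{e30})--(\ref{e37}) and (\ref{e39})--(\ref{e52}).
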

\begin{proof}
Please refer to the Appendix. $\hfill\blacksquare$
\end{proof}

Given the above result, we can obtain the sum ergodic rate of the multi-user system as $R=\sum_{k=1}^{K}r_k^*$ under ND-RIS-based CRACK, which is a function of $\mathbf{\Phi}$ and $\mathbf{J}$. In the following corollaries, we examine two limiting special cases that illustrate the effectiveness of the proposed attack.

\begin{corollary}
When the number of BS antennas is very large ($M \rightarrow \infty$), the sum ergodic rate will converge to a constant value that is independent of $M$ and determined by the design of the ND-RIS. In particular, the asymptotic ergodic rate of $\rm{LU_k}$ for this case can be approximated as
\begin{equation}\label{e19}
\begin{aligned}
r_k^* &\approx\log\left(1+\frac{P_k\mathscr{E}_k^*(\mathbf{\Phi}^*)}{\sum_{i=1, i\neq k}^{K}P_i\mathscr{I}^*_{k,i}(\mathbf{\Phi}^*)}\right),
\end{aligned}
\end{equation} 
where $\mathscr{E}_k^*(\mathbf{\Phi}^*)$ and $\mathscr{I}^*_{k,i}(\mathbf{\Phi}^*)$ are denote as (\ref{e003}) and (\ref{e004}), respectively.
\begin{figure*}
\begin{equation}\label{e003}
\begin{aligned}
\mathscr{E}_k^*(\mathbf{\Phi}^*)=c^2_k M^2\left(\kappa^2_{k,r}\left|f_k(\mathbf{\Phi}^*)\right|^2 \left|f^*_k(\mathbf{\Phi}^*)\right|^2+\kappa_{k,r} N\left|f_k(\mathbf{\Phi}^*)\right|^2 +\kappa_{k,r}N \left|f^*_k(\mathbf{\Phi}^*)\right|^2+\Psi_1(\mathbf{\Phi}^*)\right)\\
+2c_kM^2 \alpha_{k,b}\left(\kappa_{k,r} {\rm{Re}}\left\{f_k(\mathbf{\Phi}^*)f^*_k(\mathbf{\Phi}^*) \right\}+{\rm{Re}}\left\{{\rm{Tr}}\left(\mathbf{t}_1(\mathbf{\Phi}^*)\right)\right\}\right) +M^2\alpha^2_{k,b},
\end{aligned}
\end{equation} 
\hrulefill
\end{figure*}
\begin{figure*}
\begin{equation}\label{e004}
\begin{aligned}
\mathscr{I}^*_{k,i}(\mathbf{\Phi}^*)=c_k c_i M^2\left(\kappa_{k,r}\kappa_{i,r}\left|f_k(\mathbf{\Phi}^*)\right|^2 \left|f_i(\mathbf{\Phi}^*)\right|^2
+\kappa_{k,r} N\left|f_k(\mathbf{\Phi}^*)\right|^2+\kappa_{i,r}N \left|f_i(\mathbf{\Phi}^*)\right|^2+ {\rm{Tr}}\left\{\Psi_2\Psi^H_2\right\}\right),
\end{aligned}
\end{equation} 
\hrulefill
\end{figure*}
This illustrates that the BS cannot overcome CRACK by increasing the number of its antennas.
\end{corollary}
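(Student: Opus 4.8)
The plan is to treat the SINR inside the approximate rate expression of Theorem~\ref{t1} as an explicit rational function of $M$ and to extract its leading-order behavior as $M\to\infty$. The essential observation is that every quantity appearing as a coefficient in~\eqref{e17} and~\eqref{e18}---namely $c_k$, $c_i$, $\alpha_{k,b}$, $\alpha_{i,b}$, the scalars $f_k(\mathbf{\Phi}^*)$, $f^*_k(\mathbf{\Phi}^*)$, $f_i(\mathbf{\Phi}^*)$, and the traces $\Psi_1(\mathbf{\Phi}^*)$, ${\rm{Tr}}\left(\mathbf{t}_1(\mathbf{\Phi}^*)\right)$, ${\rm{Tr}}\{\Psi_2\Psi^H_2\}$---is built entirely from the $N$-dimensional objects $\mathbf{a}_N$, $\overline{\mathbf{h}}_{r,k}$, $\mathbf{\Phi}^*$, $\mathbf{J}$ and the large-scale fading factors, none of which depend on $M$. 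A one-line check of the matrix and vector dimensions confirms that these coefficients are $O(1)$ in $M$, so that the only explicit $M$-dependence in $\mathscr{E}_k^{signal}(\mathbf{\Phi}^*)$ and $\mathscr{I}_{k,i}(\mathbf{\Phi}^*)$ is carried by the visible powers of $M$.

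First I would classify the terms of~\eqref{e17} and~\eqref{e18} by their degree in $M$. Inspecting~\eqref{e17}, the signal term decomposes as $\mathscr{E}_k^{signal}(\mathbf{\Phi}^*)=A_k M^2 + B_k M$, where $A_k$ collects the $M^2$ coefficients and $B_k$ the residual $O(M)$ contributions coming from the final two parenthesized groups together with the $\alpha^2_{k,b}M$ term. Likewise~\eqref{e18} splits as $\mathscr{I}_{k,i}(\mathbf{\Phi}^*)=C_{k,i}M^2 + D_{k,i}M$, with $C_{k,i}$ the $M^2$ coefficient and $D_{k,i}$ the $O(M)$ remainder. Dividing the numerator and denominator of the SINR inside~\eqref{e14} by $M^2$ gives
\[
\frac{P_k\bigl(A_k + B_k/M\bigr)}{\sum_{i\neq k}P_i\bigl(C_{k,i}+D_{k,i}/M\bigr)+\sigma^2/M^2}.
\]
Letting $M\to\infty$, every term scaled by $1/M$ or $1/M^2$ vanishes, so the ratio converges to $P_kA_k/\sum_{i\neq k}P_iC_{k,i}$. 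Restoring the factor $M^2$ identifies $A_kM^2$ with $\mathscr{E}_k^*(\mathbf{\Phi}^*)$ in~\eqref{e003} and $C_{k,i}M^2$ with $\mathscr{I}^*_{k,i}(\mathbf{\Phi}^*)$ in~\eqref{e004}; the common factor $M^2$ then cancels in the ratio, so by continuity of $\log(1+\cdot)$ the limiting rate $r_k^*$, and hence the sum $R=\sum_{k}r_k^*$, is independent of $M$, which is precisely~\eqref{e19}.

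The step I expect to require the most care is justifying that the denominator is genuinely dominated by its $O(M^2)$ interference rather than by the noise, i.e.\ that the attacked system is \emph{interference-limited} in the large-array regime; only then is dropping $\sigma^2/M^2$ legitimate. This needs $K\ge 2$ together with $C_{k,i}>0$ for at least one $i\neq k$. Fortunately this holds unconditionally here: since $\mathbf{J}$ is a permutation matrix, $\Psi_2\Psi^H_2=\mathbf{a}_N\mathbf{a}_N^H\mathbf{J}\mathbf{J}^H\mathbf{a}_N\mathbf{a}_N^H=N\mathbf{a}_N\mathbf{a}_N^H$, whence ${\rm{Tr}}\{\Psi_2\Psi^H_2\}=N^2$, and therefore $C_{k,i}\ge c_kc_iN^2>0$ because $c_k,c_i>0$. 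With this positivity established the interference grows strictly as $\Theta(M^2)$, the noise contribution is provably negligible, and the remaining manipulations are the routine term-counting outlined above.
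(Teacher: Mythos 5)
Your proof is correct and takes essentially the same route as the paper's own: isolate the $M^2$-dominant terms in (\ref{e17}) and (\ref{e18}), note that the common factor $M^2$ cancels in the SINR ratio inside the logarithm, and conclude that the limit is independent of $M$, yielding (\ref{e19}). Your additional step of verifying ${\rm{Tr}}\left\{\Psi_2\Psi_2^H\right\}=N^2>0$ (hence $\mathscr{I}^*_{k,i}(\mathbf{\Phi}^*)>0$), so that the limiting denominator is strictly positive and the noise term $\sigma^2/M^2$ may legitimately be discarded, is a rigor detail the paper's proof silently assumes rather than a different approach.
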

\begin{proof}
    When $M \rightarrow \infty$, the terms that contain $M^2$ in (\ref{e17}) and (\ref{e18}) will dominate, and thus the remaining terms can be neglected. Retaining the terms with $M^2$, we can derive~(\ref{e003}) and (\ref{e004}). The scaling by $M^2$ cancels in the ratio within the logarithm of~(\ref{e19}), and thus for large $M$ the sum ergodic rate will converge to a constant value that depends on the design of the ND-RIS but not the number of BS antennas $M$. $\hfill\blacksquare$
\end{proof}

\begin{corollary}
Assume equal power transmission to all users, i.e., $P_k=P$ for all $k$. As $P\rightarrow \infty$, the ergodic rate converges to a constant that is independent of $P$. Thus, the BS cannot overcome CRACK by increasing its transmit power.
\end{corollary}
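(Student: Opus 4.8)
The plan is to exploit the structure of the ergodic-rate expression established in Theorem~\ref{t1}, observing that under equal power allocation both the desired signal power and the aggregate multiuser interference power scale linearly in the common transmit power $P$, whereas only the noise term $\sigma^2$ does not. First I would substitute the equal-power condition $P_k = P$ for all $k$ into~(\ref{e14}), yielding
\[
r_k^* \approx \log\left(1 + \frac{P\,\mathscr{E}_k^{signal}(\mathbf{\Phi}^*)}{P\sum_{i=1,i\neq k}^{K}\mathscr{I}_{k,i}(\mathbf{\Phi}^*) + \sigma^2}\right).
\]
The key observation, which I would verify by inspecting~(\ref{e17}) and~(\ref{e18}), is that the quantities $\mathscr{E}_k^{signal}(\mathbf{\Phi}^*)$ and $\mathscr{I}_{k,i}(\mathbf{\Phi}^*)$ depend only on the ND-RIS configuration $\mathbf{\Phi}^*$, the channel statistics (path-loss factors $\alpha$ and Rician factors $\kappa$), and the dimensions $M,N$; crucially, neither contains any power variable.

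The second step is to factor $P$ out of the numerator and denominator and cancel, which gives
\[
r_k^* \approx \log\left(1 + \frac{\mathscr{E}_k^{signal}(\mathbf{\Phi}^*)}{\sum_{i=1,i\neq k}^{K}\mathscr{I}_{k,i}(\mathbf{\Phi}^*) + \sigma^2/P}\right),
\]
after which letting $P\to\infty$ sends the residual noise term $\sigma^2/P$ to zero. To conclude that the limit is a genuine finite constant rather than an indeterminate or infinite expression, I would confirm that the interference denominator is strictly positive: the final term $\alpha_{k,b}\alpha_{i,b}M$ in~(\ref{e18}) is strictly positive whenever the direct BS-user path losses are nonzero, so $\sum_{i\neq k}\mathscr{I}_{k,i}(\mathbf{\Phi}^*)>0$ and the argument of the logarithm approaches a well-defined finite value. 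This establishes
\[
\lim_{P\to\infty} r_k^* = \log\left(1 + \frac{\mathscr{E}_k^{signal}(\mathbf{\Phi}^*)}{\sum_{i=1,i\neq k}^{K}\mathscr{I}_{k,i}(\mathbf{\Phi}^*)}\right),
\]
which depends only on $\mathbf{\Phi}^*$ and the channel statistics and not on $P$; summing over $k$ then gives the claimed $P$-independent constant for the sum rate.

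There is essentially no hard obstacle here, since the result is a direct consequence of the interference-limited nature of the attacked system: because the reciprocity attack causes the precoder to leak signal energy into the interference subspace, increasing $P$ amplifies the desired signal and the multiuser interference in exactly the same proportion, so the SINR saturates. The only point requiring a moment of care is the well-definedness of the limiting ratio, i.e.\ the strict positivity of the interference term, which I would note is guaranteed by the presence of the direct BS-user links. This interference-limited saturation is precisely what makes CRACK robust and furnishes the intuition for why the BS cannot defeat the attack by raising its transmit power.
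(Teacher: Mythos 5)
Your proposal is correct and follows essentially the same route as the paper's own proof: substitute $P_k = P$ into (\ref{e14}), let $P \to \infty$ so the noise term vanishes, and observe that the resulting SINR $\mathscr{E}_k^{signal}(\mathbf{\Phi}^*)/\sum_{i\neq k}\mathscr{I}_{k,i}(\mathbf{\Phi}^*)$ depends only on the ND-RIS design and channel statistics. Your additional check that the interference sum is strictly positive (via the $\alpha_{k,b}\alpha_{i,b}M$ term in (\ref{e18})) is a small rigor bonus the paper leaves implicit, but it does not constitute a different approach.
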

\begin{proof}
When $P_k = P \rightarrow \infty$, the ergodic rate of $\rm{LU_k}$ in (\ref{e14}) becomes
\begin{equation}\label{e002}
\begin{aligned}
r_k^* &\approx\log\left(1+\frac{\mathscr{E}_k^{signal}(\mathbf{\Phi}^*)}{\sum_{i=1, i\neq k}^{K}\mathscr{I}_{k,i}(\mathbf{\Phi}^*)}\right), 
\end{aligned}
\end{equation}
which is a fixed value determined by the design of ND-RIS and is independent of $P$. $\hfill\blacksquare$
\end{proof}

\section{Genetic Algorithm Optimization for CRACK}

The expression derived in the previous section provides the ergodic sum rate assuming a fixed ND-RIS configuration $\mathbf{\Phi}^*$. Later we will see in the simulation section that, even without any CSI to design $\mathbf{\Phi}^*$, CRACK can significantly degrade the achievable sum rate of the MU-MISO system. On the other hand, if the ND-RIS does possess statistical CSI and information about the LoS components of the MU-MISO links, the choice of $\mathbf{\Phi}^*$ can be optimized in order to inflict greater performance loss. Here we address the problem of optimizing CRACK to solve the following sum ergodic rate minimization problem:
\begin{align}
&\min_{\mathbf{\Phi}, \mathbf{J}} \left\{\sum_{k=1}^{K}r^*_k\right\}  \quad \text{such that} \label{e20}\\
&C_1:\sum_{i=1}^{N}\left[\mathbf{J}\right]_{i,j} = 1,\sum_{j=1}^{N}\left[\mathbf{J}\right]_{i,j} = 1, \left[\mathbf{J}\right]_{i,j}\in \left\{0,1\right\}\tag{\ref{e20}{a}}\label{e20a},\\
&C_2:|\theta_n|=1, \forall n=1,2,...,N \tag{\ref{e20}{b}}\label{e20c},
\end{align}
where constraint \eqref{e20a} forces $\mathbf{J}$ to be a permutation matrix and \eqref{e20c} constrains the modulus of the RIS reflection coefficients to be unity. This is a non-convex mixed-integer programming problem that is very difficult to solve in a computationally efficient manner. Fortunately, since the problem depends on quantities that either do not change or change only very slowly, e.g., LoS angles and statistical CSI, the optimization only needs to be performed infrequently. 

The Genetic Algorithm (GA)~\cite{mirjalili2019evolutionary} is a good candidate for solving difficult non-convex problems with many local minima, especially those that involve integer-valued variables. For this reason, GA has been applied to RIS-related problems before~\cite{9743440,9366346,10355858,10326460}, and it is particularly well suited for cases where the RIS phases are quantized and must be chosen from a discrete set. However in our problem~\eqref{e20}, the integer variables apply to the construction of the optimal permutation matrix, whose elements are constrained to one per row and column. This is quite unlike the case of quantized phases, where the choice of phase at one RIS element does not constrain the choice at another, and requires the development of a specialized version of GA as described below.

GA is a type of evolutionary algorithm inspired by the process of natural selection. It works by iteratively evolving a population of candidate solutions over multiple generations. The process involves selecting individuals with higher ``fitness'' (to be defined), combining their genetic information through crossover, introducing random changes through mutation, and repeating the process until satisfactory solutions are obtained. The benefits of 
genetic-type algorithms include:
\begin{itemize}
\item {\bf Global Optimization:}  GA is capable of finding solutions in large search spaces by encouraging diversity in the search space, which helps prevent premature convergence to sub-optimal solutions, and hence increase the chance of finding the global optimum or near-optimal solutions.
\item {\bf Flexibility:} GA can be applied to a wide range of problems across different domains. It is not limited to specific mathematical functions or problem types, making it particularly applicable in complicated scenarios.
\item {\bf Mathematical Convenience:} Unlike traditional optimization techniques that require derivatives of the objective function, GA does not depend on gradient information. It is effective in problems where derivatives are difficult or impossible to obtain. Moreover, it is well-suited for solving combinatorial problems involving discrete elements and constraints.
\item {\bf Adaptability to Changing Environments:} GA can adapt to changes in the problem environment over time. As the fitness landscape changes, GA can adjust and search for new solutions.
\end{itemize}

The optimization problem in (\ref{e20}) is well suited for GA due to the integer constraints that arise from the permutation matrix, which make the problem difficult to solve via conventional optimization methods. GA can effectively handle combinatorial and non-convex optimization problems such as~\eqref{e20} since it does not rely on gradient information. The main steps for implementing GA to solve~\eqref{e20} are summarized in Fig.~\ref{f2}, and details are provided below and in Algorithm~1.

\begin{figure}[!t]
\begin{center}
\includegraphics[width=2.3 in]{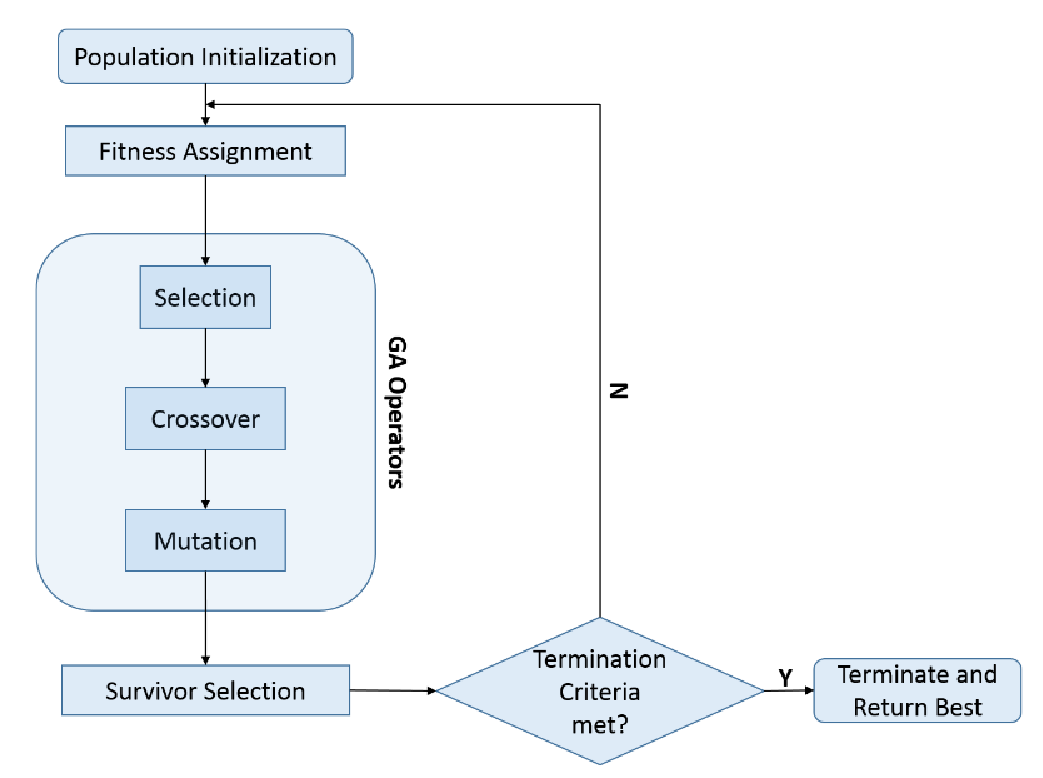}
\end{center}
\vspace{-0.08in}
\caption{ Illustration of Genetic Algorithm.}\label{f2}
\vspace{-0.12in}
\end{figure}

{\em 1) Generating the initial population} - The GA is initialized by generating a number of random choices for the parameters of interest $\mathbf{J}$ and $\mathbf{\Phi}$. Each set of parameters is referred to as an ``individual,'' and the parameters are referred to as ``chromosomes.'' In this case, the individuals have $2N$ random chromosomes each, $N$ representing the RIS phase shifts and $N$ accounting for the construction of the permutation matrix. To begin, the $G$ individuals with the highest fitness are used to form the initial population. This initial population is then evolved to the next generation, following the steps below.

{\em 2) Fitness evaluation} - The fitness function quantifies the suitability of each individual in solving the proposed problem. Since our goal in~\eqref{e20} is to minimize the ergodic sum rate of the network, we define the fitness function as follows:
\begin{equation}\label{e21}
\begin{aligned}
\text{fit}(\mathbf{J},\mathbf{\Phi})=\frac{1}{\sum_{k=1}^{K}r^*_k}.
\end{aligned}
\end{equation}

{\em 3) Individual selection} - According to their fitness value, we identify some individuals from the current population as elites and some as parents to generate offspring. First, the $G_e$ individuals with the highest fitness are selected as ``elites,'' and they will be directly passed on to the next generation to preserve their good genes. Then we randomly select 2$G_c$ parents and pair them as $G_c$ couples based on ``roulette wheel'' selection. A section of the wheel is assigned to each possible selection in proportion to their fitness value, and then a random selection is made, similar to how the roulette wheel is spun. Specifically, if $\text{fit}(\mathbf{J}_t,\mathbf{\Phi}^*_t)$ is the fitness of individual $t$ in the population, its probability of being selected is
\begin{equation}\label{e22}
\begin{aligned}
p_t=\frac{\text{fit}(\mathbf{J}_t,\mathbf{\Phi}_t)}{\sum_{t=1}^{G}\text{fit}(\mathbf{J}_t,\mathbf{\Phi}_t)}.
\end{aligned}
\end{equation}
After 2$G_c$ spins of the wheel, we generate a group of parents that will be used for the crossover operation in the next step. Finally, we sample $G_m=G-G_e-G_c$ individuals from the current population with equal probability, including the elites and selected parents, to mutate and generate new individuals. 

{\em 4) Crossover} - The crossover step extracts chromosomes from different parents and recombines them into potentially superior offspring. We will use the 2$G_c$ previously selected parents to perform multi-point crossover and generate $G_c$ offspring. Since the permutation matrix and phase shift matrix variables are of different types, discrete and continuous, respectively, their crossover processes must be implemented differently. The crossover steps include: randomly generating multiple crossover points, selected parents swapping chromosomes at designated crossover points, and generating new individuals. This is straightforward for the entries of $\mathbf{\Phi}$, which are simply swapped between the parents at the randomly selected locations. For the permutation matrix, the crossover operation must result in offspring that satisfy the integer constraints in~\eqref{e20}. For example, suppose the ordered indices $[2,1,3,5,4]$ and $[3,4,1,5,2]$ define the  permutation matrix $\mathbf{J}$ of the two parents. If the randomly chosen crossover point is 2, the ordered indices for $\mathbf{J}$ in the new individuals will be $[2,4,3,5,1]$ and $[3,1,4,5,2]$, where the entry that was swapped out is reinserted in the previous position of the value that was swapped in. Of the couple that remains after all the crossover steps, only one is selected at random as the output of the crossover operation for the next generation. The details of this step are given in Algorithm~2.

{\em Mutation} - Mutation increases the population diversity and increases the likelihood that offspring with better fitness are generated. After the above procedures, we use multi-point uniform mutation to generate additional new individuals. The steps of this method are the following: (1) multiple mutation points are randomly generated, (2) individuals are selected at random to implement chromosome mutation at the designated points, and (3) new individuals are generated. When mutating the permutation matrix and phase shifts to form a new individual, we will select new values with equal probability from their corresponding constraint sets for the mutation points. Pseudo-code for this step is given in Algorithm 2.

\begin{algorithm}[h]
\algsetup{linenosize=\tiny} \scriptsize
    \caption{GA-Based Optimization Scheme}
    \label{alg:1}
    \begin{algorithmic}[1]
        \STATE Define MU-MISO scenario, large-scale fading coefficients, LoS components of the RIS-cascaded channels, and global statistical CSI of the Rayleigh channel components. Initialize population parameters $G^*, G, G_e, G_c, G_m$, number of crossover points $O_1, O_2$ for $\mathbf{J}$ and $\mathbf{\Phi}$, respectively, and number of mutation points $O_3, O_4$ for $\mathbf{J}$ and $\mathbf{\Phi}$, respectively.
        \STATE Initialize a population of individuals of size $G^*$, where each individual $t$ has a randomly generated chromosome $\left\{\mathbf{J}, \mathbf{\Phi}\right\}_t$. From the set of $G^*$ individuals, identify the $G$ with highest fitness as the initial group.
        \FOR{Count $\leq$ Epoch}
            \STATE Calculate the fitness of each individual $t$ as $\text{fit}(\mathbf{\Phi}^*_t)$.
            \STATE Select $G_e$ individuals with highest fitness from current population as elites.
            \STATE Using Algorithm 2, select 2$G_c$ parents based on roulette wheel selection to perform crossover and generate $G_c$ offspring, and select $G_m$ individuals from the current population with equal probability to generate mutated individuals.
            \STATE Combine elites, crossover offspring, and mutated offspring for next generation;
            \STATE Count = Count+1;
        \ENDFOR
        \STATE Output chromosome of individual with highest fitness in the current population.
    \end{algorithmic}
\end{algorithm}
\vspace{-0.15in}
\begin{algorithm}[h]
\algsetup{linenosize=\tiny} \scriptsize
    \caption{Crossover and Mutation Operations}
    \label{alg:2}
    \begin{algorithmic}[1]
        \STATE Crossover: Perform the following steps for each of the $G_c$ couples.
        \FOR{Count $\leq$ $G_c$}
            \STATE Generate $O_1$ distinct random integers from $\left[1, N\right]$ as crossover points for $\mathbf{J}$. For each point, create new parents from the current parents by swapping the entries of $\mathbf{J}$ at the selected point.
            \STATE Generate $O_2$ distinct random integers from $\left[1, N\right]$ as crossover points for $\mathbf{\Phi}$. For each point, create new parents from the current parents by swapping the elements of $\mathbf{\Phi}$ at the selected point.
            \STATE Randomly select either remaining parent as a new individual for the next population.
            \STATE Count = Count+1;
        \ENDFOR
        \STATE Mutation: Perform the following steps for all $G_m$ individuals selected for mutation.
        \FOR{Count $\leq$ $G_m$}
            \STATE Generate $O_3$ distinct random integers from $\left[1, N\right]$ as mutation points for $\mathbf{J}$. Swap the index of $\mathbf{J}$ at that point with another one chosen at random.
            \STATE Generate $O_4$ distinct random integers from $\left[1, N\right]$ as mutation points for $\mathbf{\Phi}$. For each point, change the element of $\mathbf{\Phi}$ at that index to another random value on the unit circle.
            \STATE Count = Count+1;
        \ENDFOR
        \STATE Output the new individuals for the next population.
    \end{algorithmic}
\end{algorithm}

\section{Numerical Results}

In this section, we provide numerical results to evaluate the effectiveness of the proposed ND-RIS CRACK in Section II and validate the theoretical analysis in Section III. We consider an MU-MISO system in which the BS and ND-RIS are equipped with uniform linear arrays whose elements are separated by one-half wavelength each. The BS is located at the 3D coordinates (5m, 35m, 20m), and the LUs are randomly distributed in a circular region $S$ centered at (5m, 0m, 1.5m) with a radius of 10m. The ND-RIS is deployed at the location (0m, 30m, 15m). The path-loss exponents of the RIS-User link and BS-User link are set as $\iota_{k,r}=2.5$ and $\iota_{k,b}=3.5$, respectively. Unless otherwise specified, the path-loss exponent of the RIS-BS link is $\iota_{r,b}=2$ and the BS transmit power is $P_k=20$ dBm. When implementing GA, we set $O_i=3$ for all $i$ except $O_2=5$, $G=50$, $G_e=5$, $G_c=35$, $G_m=10$, $G^*=100$ for $\iota_{r,b}=2$, and $G^*=300$ for $\iota_{r,b}=2.5$ or $2.8$. Several different cases for the number of ND-RIS elements and BS antennas are considered, as listed in Table~\ref{tab: SIMULATION PARAMETERS} together with other simulation parameters.

\begin{table}[htp]
    \vspace{-0.04in}
    \centering
    \caption{SIMULATION PARAMETERS}
    \label{tab: SIMULATION PARAMETERS}
    \begin{tabular}{|l|l|c|}\hline
        \textbf{Parameter}& \textbf{Value}\\\hline
        Number of users $K$ & 4 \\\hline
        Number of BS antennas $M$ & [16,32,64,128,256,512] \\\hline
        Number of RIS elements $N$ & [16,32,64,128,256,512] \\\hline
        Population size $G$ & 50 \\\hline
        Path loss $\rho$ & $-20$ dB \\\hline
        Carrier frequency& $28$ GHz \\\hline
        Bandwidth $BW$ & $10$ MHz \\\hline
        Noise power $\sigma^2$ & $ -170+10\rm{log}_{10} (BW)$~dBm \\\hline
        Rician factor $\kappa_{k,r}$ & 1.995 \\\hline
        Rician factor $\kappa_{r,b}$ & [2,4,8,16,32] \\\hline
    \end{tabular}
    \vspace{-0.01in}
\end{table}

\subsection{Benchmark Schemes and Metrics}
We will compare our proposed GA-based ND-RIS CRACK to the following benchmark schemes:
\begin{itemize}
\item[*] \textbf {Without ND-RIS}: The MU-MISO system with no ND-RIS present.
\item[*] \textbf {Random ND-RIS}: The ND-RIS is randomly chosen to implement the reciprocity attack, where $\mathbf{J}$ and $\mathbf{\Phi}$ follow uniform distributions where each possible state is equally likely.
\item[*] \textbf {Heuristic Algorithm 1 (HA1)}: The permutation $\mathbf{J}$ is randomly generated, and based on knowledge of the MU-MISO system's LoS channels, $\mathbf{\Phi}$ is designed to minimize the strength of the LoS BS-RIS-User links $\sum_{k=1}^{K}\Vert \overline{\mathbf{h}}_{r,k}\mathbf{\Phi}^*\overline{\mathbf{H}}_{b,r}\Vert^2$ in each transmission period. When the permutation matrix $\mathbf{J}$ is fixed, the optimization can be solved by the method proposed in~\cite[(41)-(44)]{9737373}.
\item[*] \textbf {Heuristic Algorithm 2 (HA2)}: The phase shifts $\mathbf{\Phi}$ are randomly generated, and based on knowledge of the MU-MISO system's LoS channels, $\mathbf{J} $ is designed to minimize the strength of the LoS BS-RIS-User links $\sum_{k=1}^{K}\Vert \overline{\mathbf{h}}_{r,k}\mathbf{\Phi}^*\overline{\mathbf{H}}_{b,r}\Vert^2$ in each transmission period. When $\mathbf{\Phi}$ is fixed, $\mathbf{J}$ can be optimized using GA.
\end{itemize}

\subsection{Non-Optimized CRACK}

In the following simulations of non-optimized CRACK, we randomly generate 300 ND-RIS realizations, and for each ND-RIS we average the achieved sum rate over 2000 random channel realizations to calculate the average sum ergodic rate.

\begin{figure}[t]
\centering
\subfigure[MRT]
{
    \begin{minipage}[b]{.8\linewidth}
        \centering
        \includegraphics[width = 7cm,height = 5.5cm ]{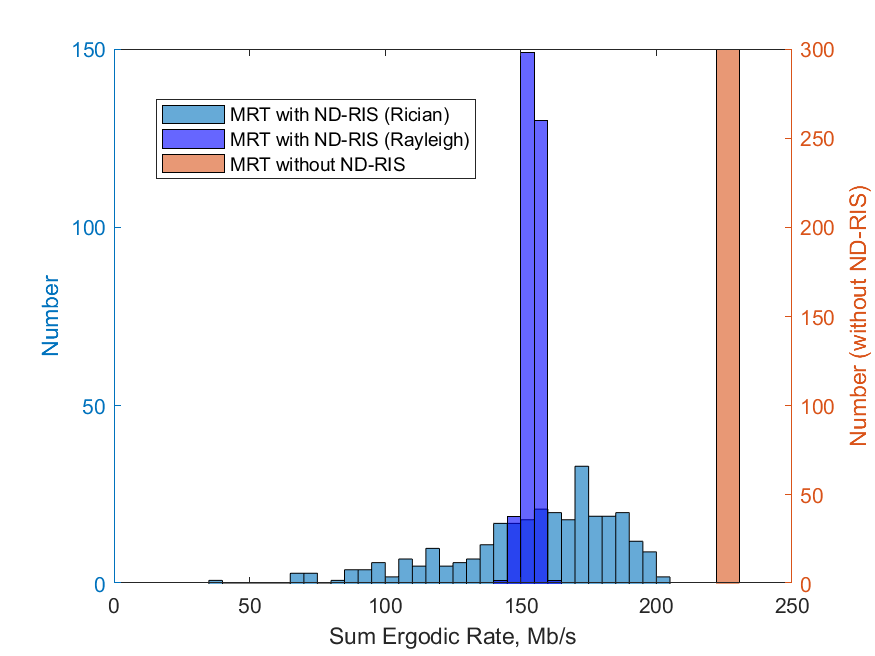}
    \end{minipage}
}
\subfigure[ZF]
{
 	\begin{minipage}[b]{.8\linewidth}
        \centering
        \includegraphics[width =7cm,height = 5.5cm]{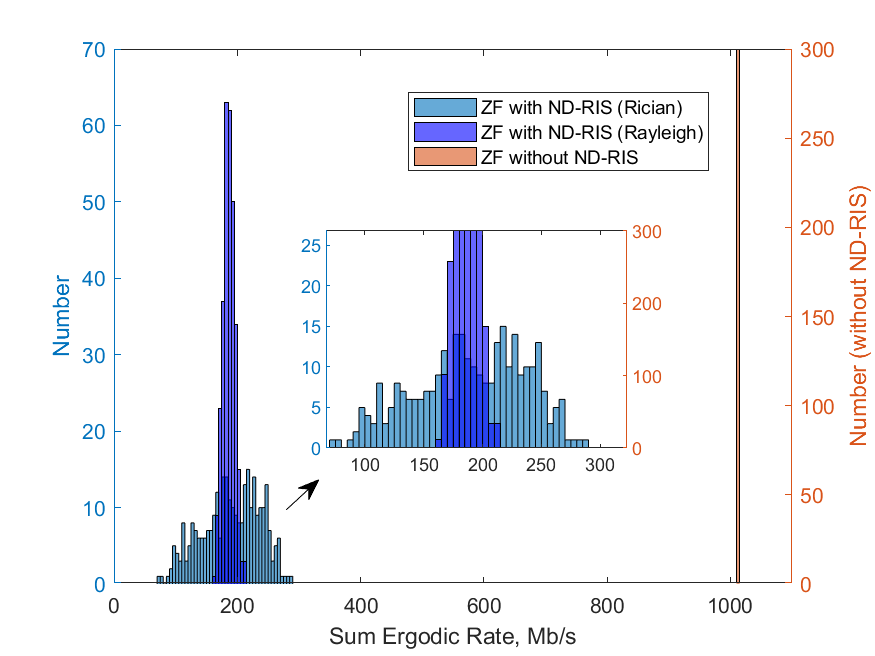}
    \end{minipage}
}
\vspace{-0.08in}
\caption{Sum ergodic rate versus different ND-RIS values for (a) MRT and (b) ZF, where M=128, N=32.}\label{g2}
\vspace{-0.13in}
\end{figure}

\subsubsection{CRACK under Rayleigh and Rician RIS-User channels}
Fig.~\ref{g2} shows the histograms of the sum ergodic rate of the MU-MISO system for Rayleigh and Rician RIS-User channels, assuming the BS employs MRT and ZF beamforming. 
For the case of Rayleigh RIS-User fading, there is little variation in the results
as the ND-RIS varies from trial to trial, and CRACK achieves approximately a $32\%$ decrease in the sum rate compared to the case without the ND-RIS for MRT and approximately a $83\%$ decrease for ZF. For the Rician RIS-User case, the existence of a strong LoS RIS-User link increases the variance of the sum rate under CRACK for both two beamforming schemes, with dramatic degradation in certain cases when the ND-RIS configuration significantly modifies the uplink and downlink LoS path. This motivates use of the GA optimization when knowledge of the LoS path can be exploited, as demonstrated later.

\begin{figure}[t]
\begin{center}
\vspace{-0.1in}
\includegraphics[width=2.8 in]{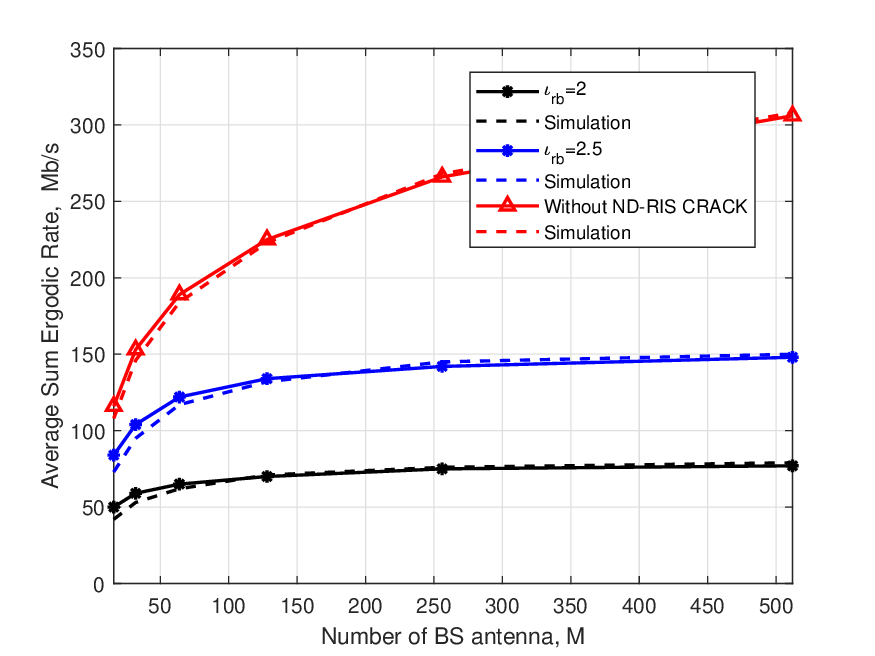}
\end{center}
\vspace{-0.1in}
\caption{Average sum ergodic rate versus different number of BS antennas (MRT, N=128).}\label{p3}
\vspace{-0.11in}
\end{figure}

\subsubsection{MRT beamforming on CRACK}
Fig.~\ref{p3} plots the simulated and predicted average sum ergodic rate as a function of the number of BS antennas and highlights the influence of the BS-RIS channel path loss on CRACK performance, assuming the BS employs MRT beamforming. The simulation results validate the accuracy of our approximate ergodic rate expression, particularly for large values of $M$. When the ND-RIS is present, we see that CRACK cannot be thwarted by continually increasing $M$, as predicted by our analysis which shows that the ergodic sum rate converges to a fixed value for large $M$. In contrast, and as expected, the effectiveness of CRACK relies on the quality of the RIS-BS channel. A higher path loss results in a diminished cascaded RIS channel gain, leading to a weaker CRACK impact. 

\begin{figure}[!t]
\begin{center}
\vspace{-0.1in}
\includegraphics[width=2.8 in]{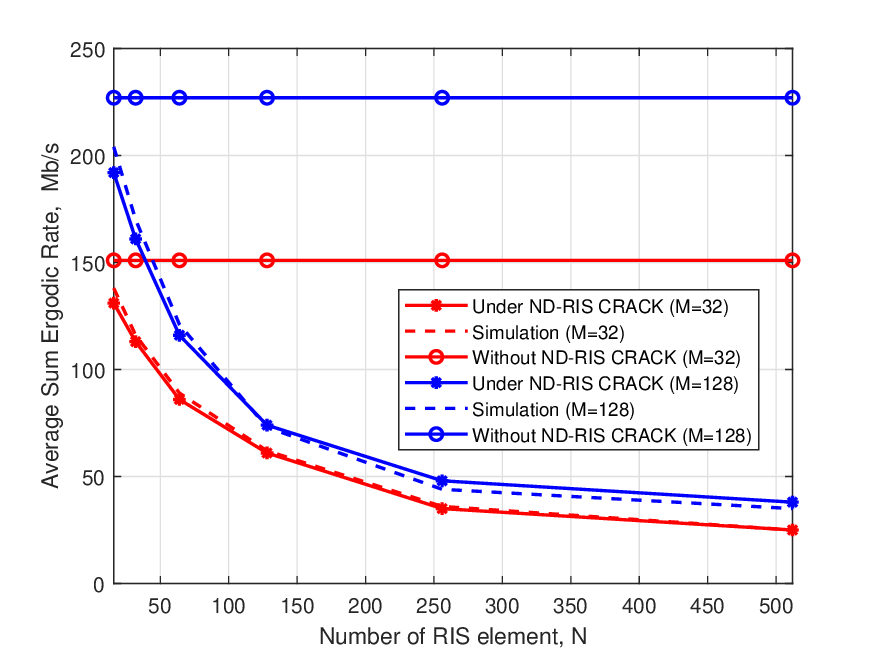}
\end{center}
\vspace{-0.1in}
\caption{Average sum ergodic rate versus ND-RIS size (MRT).}\label{p4}
\vspace{-0.05in}
\end{figure}

In Fig.~\ref{p4}, we show the average sum ergodic rate as a function of the number of ND-RIS elements $N$ for $M=32$ and $M=128$ BS antennas while also comparing the impact of the BS antenna array size on the attack performance, assuming the BS employs MRT beamforming. As $N$ increases, the sum ergodic rate initially experiences a steep decline before stabilizing at an extremely low level. This observation highlights the significant detrimental effect of CRACK on the MU-MISO system, where a large degradation can be achieved without relying on CSI for the ND-RIS design. In particular, we observe a reduction of nearly $82\%$ in the sum ergodic rate when $N=512$ and $M=128$.

\begin{figure}[!t]
\begin{center}
\vspace{-0.1in}
\hspace*{-0.5cm}\includegraphics[width=4in]{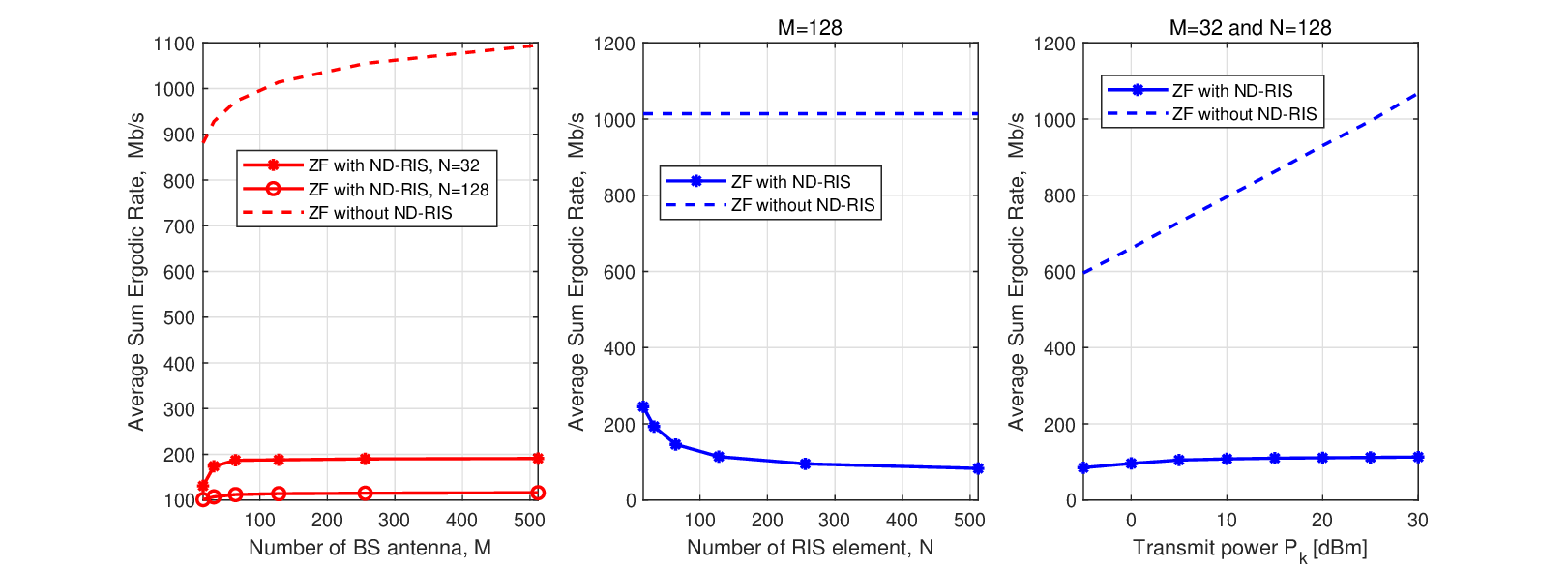}
\end{center}
\vspace{-0.2in}
\caption{Average ergodic sum rate of MU-MISO system with ZF precoding.}\label{p8}
\vspace{-0.13in}
\end{figure}

\subsubsection{ZF beamforming on CRACK}
As illustrated in the first example, the proposed CRACK approach degrades the MU-MISO system throughput for other precoders besides MRT. In Fig.~\ref{p8}, we present empirical results for the case of ZF precoding, showing the sum ergodic rate versus the number of BS antennas, RIS elements, and BS transmit power, for the case where the ND-RIS configuration is randomly determined. Compared with the results for MRT in the previous section, we see that CRACK yields a significantly larger degradation in performance for ZF precoding, since ZF ideally eliminates all multiuser interference. Similar trends are observed as in the previous cases, with additional BS antennas not providing any relief to the attack, and an increasing RIS size providing significant benefits. The rightmost figure further illustrates that increasing the BS transmit power cannot be used as a mechanism to overcome CRACK, as would be the case for other types of physical layer attacks.

\subsubsection{$1$-bit ND-RIS CRACK}
In this example, we demonstrate that the success of CRACK does not depend on continuous phase control of the ND-RIS elements. To show this, we assume an ND-RIS whose phase shifts can be in one of two states, $\theta \in \left\{0,\pi\right\}$, with one bit of control. Fig.~\ref{p10} illustrates the sum ergodic rate for both MRT and ZF precoding at the BS versus the number of ND-RIS elements, comparing the degradation of a 1-bit ND-RIS and a normal ND-RIS with the system without the ND-RIS. We see that the results are very similar to those for the case with full phase control.

\begin{figure}[t]
\begin{center}
\vspace{-0.05in}
\includegraphics[width=2.8 in]{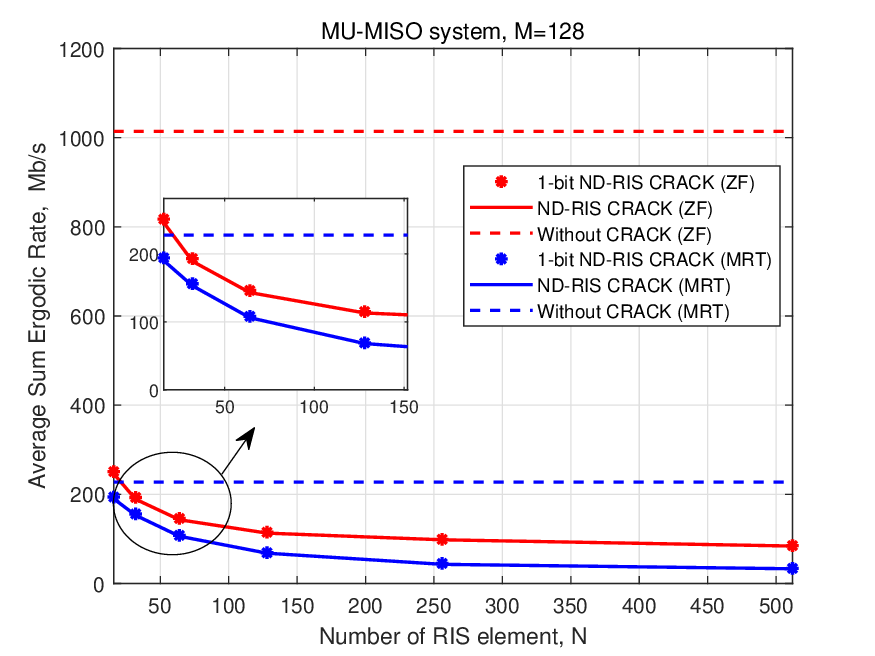}
\end{center}
\vspace{-0.2in}
\caption{Average ergodic sum rate under 1-bit ND-RIS CRACK with M=128.}\label{p10}
\vspace{-0.01in}
\end{figure}

\subsubsection{ND-RIS Deployment Strategy}

\begin{figure}[!t]
\begin{center}
\vspace{-0.01in}
\includegraphics[width=2.8 in]{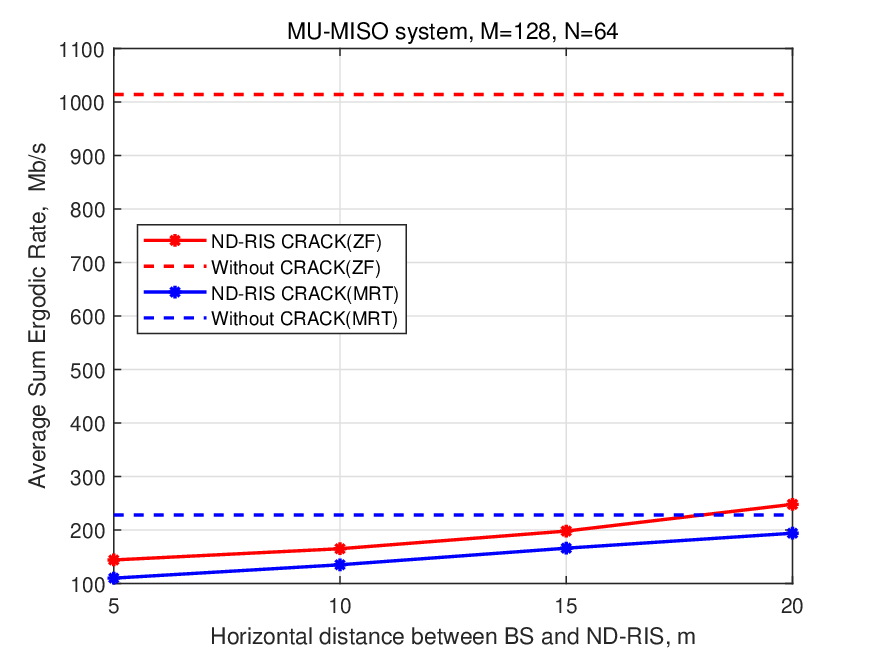}
\end{center}
\vspace{-0.1in}
\caption{Influence of RIS-BS distance on ergodic sum rate under CRACK. }\label{p11}
\vspace{-0.13in}
\end{figure}

This example demonstrates that the ND-RIS will have a more significant influence if it is deployed closer to BS than to the users. To show this, we place the ND-RIS nearly on the line between the BS and the users and vary the distance between the ND-RIS and BS. Fig.~\ref{p11} indicates that as the distance increases and the ND-RIS approaches the users, the sum ergodic rate of the MU-MISO system increases for both ZF and MRT.

\subsubsection{Influence of RIS-BS Rician Factor}

\begin{figure}[ht]
\begin{center}
\vspace{-0.01in}
\includegraphics[width=2.8 in]{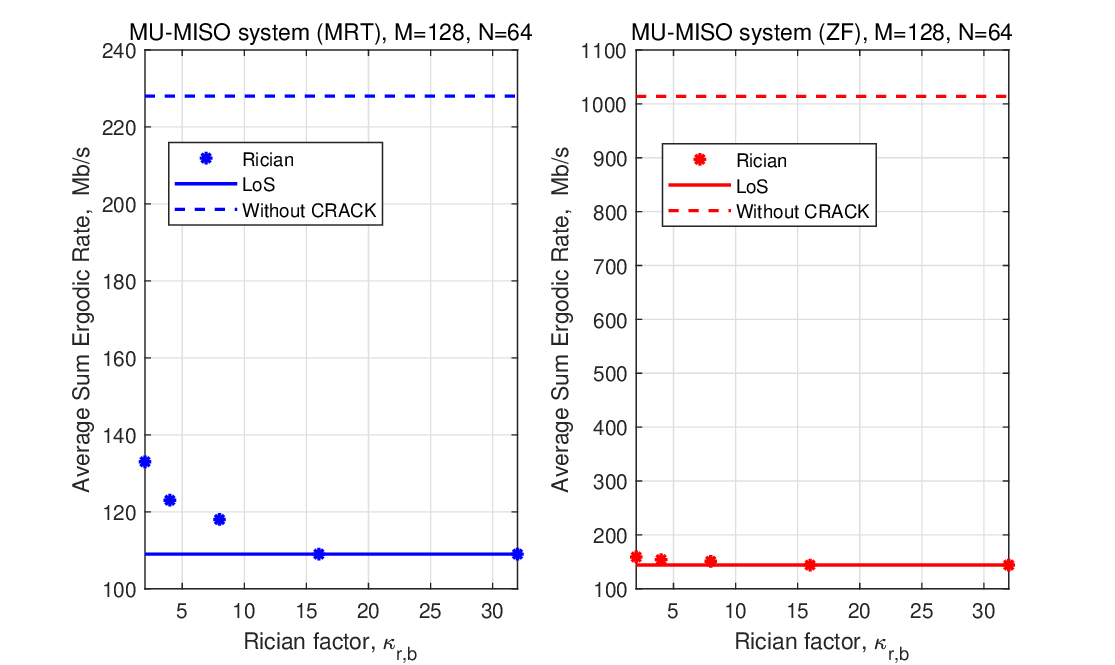}
\end{center}
\vspace{-0.2in}
\caption{Influence of RIS-BS Rician factor on ergodic sum rate under CRACK. }\label{p12}
\vspace{-0.01in}
\end{figure}

Our performance analysis in Section III assumed an LoS-only RIS-BS channel. In this example, we verify that CRACK is nonetheless effective in the more general case of a Rician channel with gain $\kappa_{r,b}$. Fig.~\ref{p12} shows that as the strength of the LoS path increases, the average sum ergodic rate of the MU-MISO system under CRACK will decrease for both ZF and MRT, with a larger impact for MRT. When $\kappa_{r,b} \ge 16$, the Rician and LoS channel models yield very similar results, and in such cases the NLoS component can be ignored. However, even for small $\kappa_{r,b}$, CRACK provides a significant performance degradation.

\subsection{GA-Based CRACK Optimization}

\begin{figure}[!t]
\centering
\vspace{-0.1in}
\begin{minipage}[t]{0.40\textwidth}
\centering
\includegraphics[width=7cm]{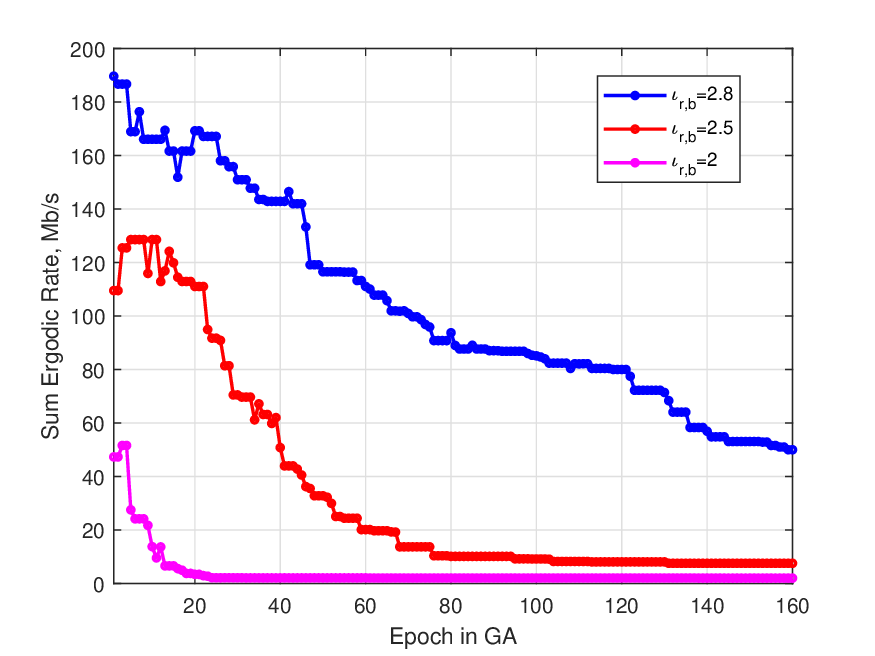}
\vspace{-0.1in}
\caption{Impact of BS-RIS path loss exponent $\iota_{r,b}$ on GA based optimization (M=128 and N=32).}\label{p5}
\end{minipage}
\begin{minipage}[t]{0.40\textwidth}
\centering
\includegraphics[width=7cm]{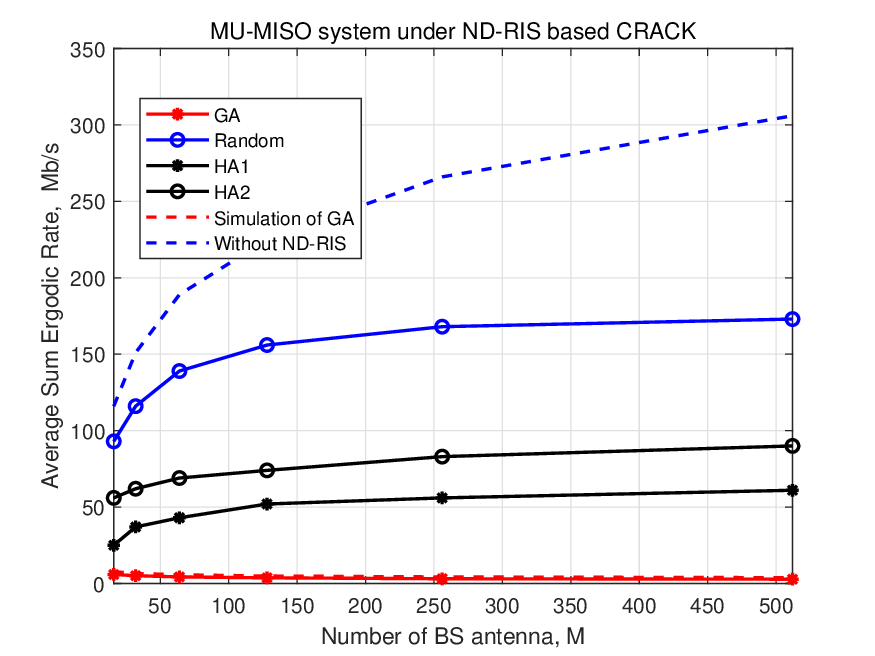}
\vspace{-0.1in}
\caption{GA performance for ND-RIS CRACK (N=32).}\label{p6}
\end{minipage}
\vspace{-0.1in}
\end{figure}

\begin{figure}[t]
\begin{center}
\vspace{-0.1in}
\includegraphics[width=2.8 in]{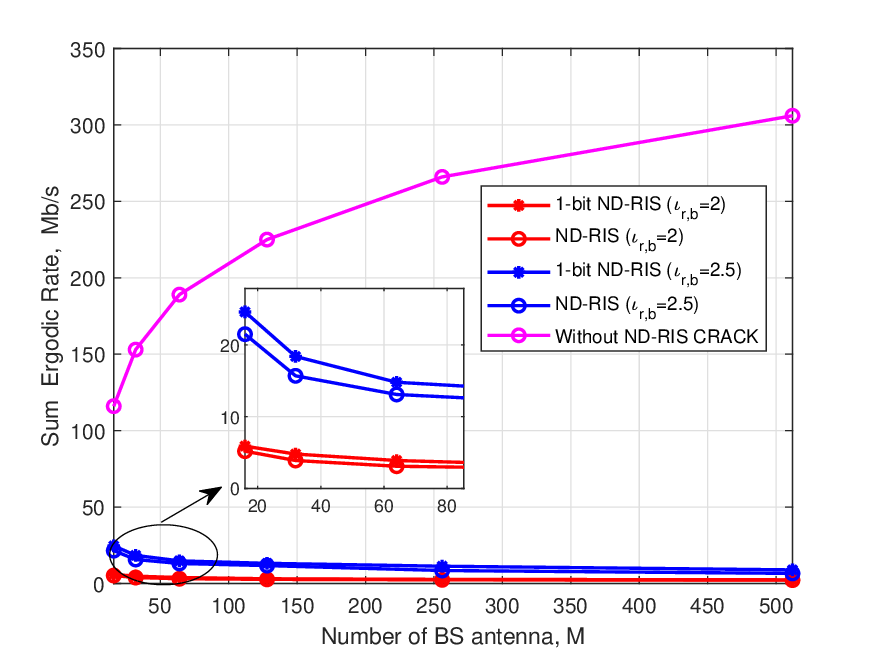}
\end{center}
\vspace{-0.2in}
\caption{Performance of CRACK with 1-bit and full resolution ND-RIS.}\label{p14}
\vspace{-0.13in}
\end{figure}

Here we demonstrate that when a malicious attacker has the opportunity to obtain the statistical CSI and LoS channel components of the MU-MISO system, the ND-RIS can be optimized based on (\ref{e17}) and (\ref{e18}) to minimize the sum ergodic rate of the system. In Fig.~\ref{p5}, we show the convergence of the proposed GA-based optimization for different BS-RIS path-loss exponents. It is observed that in each case, at most 160 epochs are required to achieve reasonable convergence, with some variation due to the influence of the initial selected population. The results clearly indicate that a lower path loss and hence a stronger BS-RIS channel not only yields a stronger impact for CRACK, but also much faster convergence for the algorithm. Next, we use the GA approach to optimize CRACK and compare it with other benchmark schemes. Fig.~\ref{p6} illustrates the average sum ergodic rate versus the number of BS antennas. As $M$ increases, the average sum ergodic rate initially grows for the other benchmark approaches and then stabilizes at a value that is significantly higher than that achieved using the GA optimization, which is near zero. When only the LoS information is available, the heuristic algorithms HA1 and HA2 provide a significant degradation in MU-MISO performance compared to randomly selecting the ND-RIS phases, with HA2 (fixed random phase, optimized NS-RIS connections) providing the most significant reduction, at nearly 81\% compared to the case without the ND-RIS.

In this example we use GA to optimize the design of both a 1-bit ND-RIS and an ND-RIS with full resolution phase control and compare their CRACK performance. Fig.~\ref{p14} shows the resulting ergodic sum rate versus the number of BS antenna and for two different values of the BS-RIS Rician factor $\kappa_{r,b}$. We see again that there is very little difference performance when a 1-bit rather than a full resolution ND-RIS is deployed, especially when $\kappa_{r,b}$ is small. As $M$ increases, the effect of the attack gradually increases for both types of ND-RIS, although there is not much increase in link degradation for $M$ greater than about 50 antennas. Similar to the non-optimized results in Fig.~\ref{p8}, the asymptotic behavior described in Corollary~1 holds for relatively small values of $M$. The effect of the BS-RIS Rician factor is also clearly apparent, where an increase in $\kappa_{r,b}$ from 2 to 2.5 results in an increase in sum ergodic rate by factor of 2-4 depending on the value of $M$.

\section{Conclusions}
We have proposed a novel ND-RIS-based channel reciprocity attack or CRACK approach that can be implemented to deteriorate the performance of time-division duplex communication systems. By passively eliminating the assumed reciprocity of the legitimate channel, CRACK introduces unanticipated multi-user interference into the system under attack that cannot be countered using channel estimation or by simply increasing the number of BS antennas or transmit power. In addition, the attack can be launched without any CSI about the BS-user links or any links involving the ND-RIS, it does not require creating rapid time variations in the channel, and it does not require synchronization with the training or data phases of the communication system. However, if statistical CSI is available together with information about the LoS links, then the performance degradation due to CRACK can be considerably enhanced. We developed a specialized GA-based algorithm to implement the optimization of the ND-RIS element phases and inter-element connections, and demonstrated via simulation the resulting gains. We also showed that the phase of the ND-RIS elements need only be selected with one-bit of precision in order to achieve considerable performance loss. We further derived a closed-form approximate expression for the sum ergodic rate of an MU-MISO system under CRACK, and demonstrated that the derived expressions closely match the resulting system performance. 

\appendix
\renewcommand{\appendixname}{Appendix ~\Alph{section}}

To begin with, we present some definitions that will be used in later derivations. According to the definition of the 
Rician channel~(\ref{e9}) and the
LoS channel (\ref{e10}), we can rewrite the RIS-based cascaded channels $\mathbf{g}_{k}$, $\mathbf{q}_{k}$, and $\mathbf{q}_{i}$ as
\begin{equation}\label{e23}
\begin{aligned}
\mathbf{g}_{k}=\sqrt{\frac{\alpha_{k,r}\alpha_{r,b}}{(1+\kappa_{k,r})}}\left( \underbrace{\sqrt{\kappa_{k,r}}\overline{\mathbf{h}}_{r,k}\mathbf{\Phi}^*\overline{\mathbf{H}}_{b,r}}_{\mathbf{g}^1_{k}}
+\underbrace{\widetilde{\mathbf{h}}_{r,k}\mathbf{\Phi}^*\overline{\mathbf{H}}_{b,r}}_{\mathbf{g}^2_{k}}
\right)
\end{aligned}
\end{equation}
\begin{equation}\label{e24}
\begin{aligned}
\mathbf{q}_{k}=\sqrt{\frac{\alpha_{k,r}\alpha_{r,b}}{(1+\kappa_{k,r})}}\left( \underbrace{\sqrt{\kappa_{k,r}}\overline{\mathbf{H}}^H_{b,r}\widetilde{\mathbf{\Phi}}\overline{\mathbf{h}}^H_{r,k}}_{\mathbf{q}^1_{k}}
+\underbrace{\overline{\mathbf{H}}^H_{b,r}\widetilde{\mathbf{\Phi}}\widetilde{\mathbf{h}}^H_{r,k}}_{\mathbf{q}^2_{k}}
\right)
\end{aligned}
\end{equation}
\begin{equation}\label{e25}
\begin{aligned}
\mathbf{q}_{i}=\sqrt{\frac{\alpha_{i,r}\alpha_{r,b}}{(1+\kappa_{i,r})}}\left( \underbrace{\sqrt{\kappa_{i,r}}\overline{\mathbf{H}}^H_{b,r}\widetilde{\mathbf{\Phi}}\overline{\mathbf{h}}^H_{r,i}}_{\mathbf{q}^1_{i}}
+\underbrace{\overline{\mathbf{H}}^H_{b,r}\widetilde{\mathbf{\Phi}}\widetilde{\mathbf{h}}^H_{r,i}}_{\mathbf{q}^2_{i}}
\right),
\end{aligned}
\end{equation}
respectively, where $\widetilde{\mathbf{H}}_{r,b}$, $\widetilde{\mathbf{h}}_{k,r}$, and $\widetilde{\mathbf{h}}_{i,r}$ are
independent of each other. In addition, we define
\begin{equation}\label{e26}
f_k(\mathbf{\Phi}^*)\triangleq \overline{\mathbf{h}}_{r,k}\mathbf{\Phi}^*\mathbf{a}_N
\end{equation}
\begin{equation}\label{e27}
f^*_k(\mathbf{\Phi}^*)\triangleq \mathbf{a}^H_N \widetilde{\mathbf{\Phi}}\overline{\mathbf{h}}^H_{r,k}
\end{equation}
\begin{equation}\label{e28}
f_i(\mathbf{\Phi}^*)\triangleq\mathbf{a}^H_N \widetilde{\mathbf{\Phi}}\overline{\mathbf{h}}^H_{r,i},
\end{equation}
for the convenience of the subsequent analysis. Next, we will derive each expectation in~(\ref{e17}) and (\ref{e18}).

\subsection{Derivation of $\mathbb{E}\left[|\mathbf{g}_{k}\mathbf{q}_{i}|^2\right]$}
Before the proof, we first use the following result, which holds when each element in $\widetilde{\mathbf{H}} \in \mathbb{C}^{M \times N}$ follows an i.i.d. complex Gaussian distribution with zero mean and unit variance, and $\mathbf{A} \in \mathbb{C}^{N\times M}$ is an arbitrary deterministic matrix~\cite{9743440}:
\begin{equation}\label{e29}
\mathbb{E}\left\{\rm{Re}\left\{\widetilde{\mathbf{H}}\mathbf{A}\widetilde{\mathbf{H}}\right\}\right\}=\mathbf{0}.
\end{equation}
Given~(\ref{e23}) and~(\ref{e25}), we can ignore terms with zero expectation and terms with zero real part based on~(\ref{e29}), so that
\begin{equation}\label{e30}
\begin{aligned}
\mathbb{E}|\mathbf{g}_{k}\mathbf{q}_{i}|^2&=\frac{\alpha^2_{r,b}\alpha_{k,r}\alpha_{i,r}}{(1+\kappa_{k,r})(1+\kappa_{i,r})}\mathbb{E}\left\{\left| \sum_{u=1}^{2}\sum_{v=1}^{2}\mathbf{g}^{u}_{k}\mathbf{q}^{v}_{i}   \right|^2\right\}\\&=\frac{\alpha^2_{r,b}\alpha_{k,r}\alpha_{i,r}}{(1+\kappa_{k,r})(1+\kappa_{i,r})}\mathbb{E}\left\{ \sum_{u=1}^{2}\sum_{v=1}^{2}\left|\mathbf{g}^{u}_{k}\mathbf{q}^{v}_{i}   \right|^2\right\},
\end{aligned}
\end{equation}
Next, we will calculate the terms in (\ref{e30}) one by one.
\subsubsection{$\mathbb{E}\left| \mathbf{g}^{u}_{k}\mathbf{q}^{v}_{i}   \right|^2$, $u=1,v\leq 2$}
When $v=1$, we have
\begin{equation}\label{e31}
\begin{aligned}
\mathbb{E}\left| \mathbf{g}^{1}_{k}\mathbf{q}^{1}_{i}   \right|^2
&=\mathbb{E}\left\{\kappa_{k,r}\kappa_{i,r}\left| \overline{\mathbf{h}}_{r,k}\mathbf{\Phi}^*\overline{\mathbf{H}}_{b,r} \overline{\mathbf{H}}^H_{b,r}\widetilde{\mathbf{\Phi}}\overline{\mathbf{h}}^H_{r,i}  \right|^2\right\}\\
&=\kappa_{k,r}\kappa_{i,r}M^2\left|f_k(\mathbf{\Phi}^*)\right|^2 \left|f_i(\mathbf{\Phi}^*)\right|^2,
\end{aligned}
\end{equation}
When $v=2$, 
\begin{equation}\label{e32}
\begin{aligned}
\mathbb{E}\left| \mathbf{g}^{1}_{k}\mathbf{q}^{2}_{i}   \right|^2
&=\mathbb{E}\left\{\kappa_{k,r}\left| \overline{\mathbf{h}}_{r,k}\mathbf{\Phi}^*\overline{\mathbf{H}}_{b,r} \overline{\mathbf{H}}^H_{b,r}\widetilde{\mathbf{\Phi}}\widetilde{\mathbf{h}}^H_{r,i}  \right|^2\right\}\\
&=\kappa_{k,r}M^2 N\left|f_k(\mathbf{\Phi}^*)\right|^2,
\end{aligned}
\end{equation}

\subsubsection{$\mathbb{E}\left| \mathbf{g}^{u}_{k}\mathbf{q}^{v}_{i}   \right|^2$, $u=2,v\leq 2$}
When $v=1$, we have
\begin{equation}\label{e33}
\begin{aligned}
\mathbb{E}\left| \mathbf{g}^{2}_{k}\mathbf{q}^{1}_{i}   \right|^2
&=\mathbb{E}\left\{\kappa_{i,r}\left| \widetilde{\mathbf{h}}_{r,k}\mathbf{\Phi}^*\overline{\mathbf{H}}_{b,r} \overline{\mathbf{H}}^H_{b,r}\widetilde{\mathbf{\Phi}}\overline{\mathbf{h}}^H_{r,i}  \right|^2\right\}\\
&=\kappa_{i,r}M^2N \left|f_i(\mathbf{\Phi}^*)\right|^2,
\end{aligned}
\end{equation}
When $v=2$, 
\begin{equation}\label{e34}
\begin{aligned}
\mathbb{E}\left| \mathbf{g}^{2}_{k}\mathbf{q}^{2}_{i}   \right|^2
&=\mathbb{E}\left\{\left| \widetilde{\mathbf{h}}_{r,k}\mathbf{\Phi}^*\overline{\mathbf{H}}_{b,r} \overline{\mathbf{H}}^H_{b,r}\widetilde{\mathbf{\Phi}}\widetilde{\mathbf{h}}^H_{r,i}  \right|^2\right\}\\
&=M^2 {\rm{Tr}}\left\{\Psi_2\Psi^H_2\right\}.
\end{aligned}
\end{equation}
Substituting (\ref{e31})-(\ref{e34}) into  (\ref{e30}), we can obtain $\mathbb{E}|\mathbf{g}_{k}\mathbf{q}_{i}|^2$.

\subsection{Derivation of $\mathbb{E}|\mathbf{g}_{k}\mathbf{h}^H_{b,i}|^2$, $\mathbb{E}|\mathbf{h}_{b,k}\mathbf{q}_{i}|^2$, and $\mathbb{E}|\mathbf{h}_{b,k}\mathbf{h}^H_{b,i}|^2$ }
Based on the independence of the channels, we have $\mathbb{E}\left[\mathbf{g}^u_{k}(\mathbf{g}^{v}_{k})^H\right]=0$ for $\forall u \neq v$. Then 
\begin{equation}\label{e35}
\begin{aligned}
\mathbb{E}|\mathbf{g}_{k}\mathbf{h}^H_{b,i}|^2&=\alpha_{i,b}\mathbb{E}\left[\mathbf{g}_{k}\mathbf{g}^H_{k} \right]\\
&=\frac{\alpha_{i,b}\alpha_{r,b}\alpha_{k,r}}{(1+\kappa_{k,r})}\sum_{u=1}^{2}\mathbb{E} \Vert\mathbf{g}^u_{k}\Vert^2\\
&=\frac{\alpha_{i,b}\alpha_{r,b}\alpha_{k,r}}{(1+\kappa_{k,r})}\left(\kappa_{k,r}\Vert \overline{\mathbf{h}}_{r,k}\mathbf{\Phi}^*\overline{\mathbf{H}}_{b,r}\Vert^2 \right. \\ 
 & \qquad\left.+ \mathbb{E}\left[\Vert \widetilde{\mathbf{h}}_{r,k}\mathbf{\Phi}^*\overline{\mathbf{H}}_{b,r}\Vert^2\right] \right)\\
&\overset{(b)}{=}\frac{\alpha_{i,b}\alpha_{r,b}\alpha_{k,r}\left(\kappa_{k,r} M\left|f_k(\mathbf{\Phi}^*)\right|^2+MN\right)}{(1+\kappa_{k,r})},
\end{aligned}
\end{equation}
where $(b)$ utilizes the following results:
\begin{eqnarray}
\Vert \overline{\mathbf{h}}_{r,k}\mathbf{\Phi}^*\overline{\mathbf{H}}_{b,r}\Vert^2 & = & M\left|f_k(\mathbf{\Phi}^*)\right|^2 \\
\mathbb{E}\; \Vert \widetilde{\mathbf{h}}_{r,k}\mathbf{\Phi}^*\overline{\mathbf{H}}_{b,r}\Vert^2 & = & MN \; .
\end{eqnarray}
Similarly, $\mathbb{E}|\mathbf{h}_{b,k}\mathbf{q}_{i}|^2 $ can be represented as
\begin{equation}\label{e36}
\begin{aligned}
\mathbb{E}|\mathbf{h}_{b,k}\mathbf{q}_{i}|^2 &=\frac{\alpha_{k,b}\alpha_{r,b}\alpha_{i,r}}{(1+\kappa_{i,r})}\sum_{v=1}^{2} \mathbb{E}\Vert\mathbf{q}^v_{i}\Vert^2\\
&=\frac{\alpha_{k,b}\alpha_{r,b}\alpha_{i,r}}{(1+\kappa_{i,r})}\left(\kappa_{i,r}\Vert \overline{\mathbf{H}}^H_{b,r}\widetilde{\mathbf{\Phi}}\overline{\mathbf{h}}^H_{r,i}\Vert^2 \right. \\ 
 & \qquad\left.+ \mathbb{E}\left[\Vert \overline{\mathbf{H}}^H_{b,r}\widetilde{\mathbf{\Phi}}\widetilde{\mathbf{h}}^H_{r,i}\Vert^2\right] \right)\\
&=\frac{\alpha_{k,b}\alpha_{r,b}\alpha_{i,r}\left(\kappa_{i,r} M\left|f_i(\mathbf{\Phi}^*)\right|^2+MN\right)}{(1+\kappa_{i,r})}.
\end{aligned}
\end{equation}
Since $\mathbf{h}_{b,k}$ and $\mathbf{h}_{b,i}$ are independent, we have
\begin{equation}\label{e37}
\begin{aligned}
\mathbb{E}|\mathbf{h}_{b,k}\mathbf{h}^H_{b,i}|^2
&=\mathbb{E}\left[\mathbf{h}_{b,k}\mathbf{h}^H_{b,i}\mathbf{h}_{b,i}\mathbf{h}^H_{b,k} \right] \\
&=\mathbb{E}\left[\alpha_{i,b}\mathbf{h}_{b,k}\mathbf{h}^H_{b,k} \right] =\alpha_{k,b}\alpha_{i,b}M.
\end{aligned}
\end{equation}

\subsection{Derivation of $\mathbb{E}\left[P_k|(\mathbf{g}_{k}+\mathbf{h}_{b,k})(\mathbf{q}_{k}+\mathbf{h}^H_{b,k})|^2\right]$}
We begin with the following sequence of derivations in (\ref{e38}),
\begin{figure*}[h]
\begin{equation}\label{e38}
\begin{aligned}
&\mathbb{E}\left[P_k|(\mathbf{g}_{k}+\mathbf{h}_{b,k})(\mathbf{q}_{k}+\mathbf{h}^H_{b,k})|^2\right]
=P_k\mathbb{E}\left[|\mathbf{g}_{k}\mathbf{q}_{k}+\mathbf{g}_{k}\mathbf{h}^H_{b,k}+\mathbf{h}_{b,k}\mathbf{q}_{k}+\mathbf{h}_{b,k}\mathbf{h}^H_{b,k}|^2\right]\\
& \quad =P_k\mathbb{E}\left[\mathbf{g}_{k}\mathbf{q}_{k}\mathbf{q}^H_{k}\mathbf{g}^H_{k}+ \mathbf{g}_{k}\mathbf{q}_{k}\mathbf{h}_{b,k}\mathbf{g}^H_{k} +\mathbf{g}_{k}\mathbf{q}_{k}\mathbf{q}^H_{k}\mathbf{h}^H_{b,k} 
+\mathbf{g}_{k}\mathbf{q}_{k}\mathbf{h}_{b,k}\mathbf{h}^H_{b,k}+ \mathbf{g}_{k}\mathbf{h}^H_{b,k}\mathbf{q}^H_{k}\mathbf{g}^H_{k}+ \mathbf{g}_{k}\mathbf{h}^H_{b,k}\mathbf{h}_{b,k}\mathbf{g}^H_{k} +\mathbf{g}_{k}\mathbf{h}^H_{b,k}\mathbf{q}^H_{k}\mathbf{h}^H_{b,k} \right. \\ 
& \qquad\left.
 + \mathbf{g}_{k}\mathbf{h}^H_{b,k}\mathbf{h}_{b,k}\mathbf{h}^H_{b,k}+ \mathbf{h}_{b,k}\mathbf{q}_{k}\mathbf{q}^H_{k} \mathbf{g}^H_{k}+\mathbf{h}_{b,k}\mathbf{q}_{k}\mathbf{h}_{b,k} \mathbf{g}^H_{k}+\mathbf{h}_{b,k}\mathbf{q}_{k}\mathbf{q}^H_{k}\mathbf{h}^H_{b,k} +\mathbf{h}_{b,k}\mathbf{q}_{k}\mathbf{h}_{b,k}\mathbf{h}^H_{b,k} \right. \\ 
 & \qquad \left. +\mathbf{h}_{b,k}\mathbf{h}^H_{b,k}\mathbf{q}^H_{k}\mathbf{g}^H_{k} +\mathbf{h}_{b,k}\mathbf{h}^H_{b,k}\mathbf{h}_{b,k}\mathbf{g}^H_{k} +\mathbf{h}_{b,k}\mathbf{h}^H_{b,k}\mathbf{q}^H_{k}\mathbf{h}^H_{b,k} +\mathbf{h}_{b,k}\mathbf{h}^H_{b,k}\mathbf{h}_{b,k}\mathbf{h}^H_{b,k} \right]\\
 & \quad \overset{(c)}{=}P_k\left(\mathbb{E}\left|\mathbf{g}_{k}\mathbf{q}_{k}\right|^2+ 2\mathbb{E}\left[
 {\rm{Re}}\left\{\mathbf{g}_{k}\mathbf{q}_{k}\mathbf{h}_{b,k}\mathbf{h}^H_{b,k}\right\}\right] +\mathbb{E}\left|\mathbf{g}_{k}\mathbf{h}^H_{b,k}\right|^2 
 +2\mathbb{E}\left[
 {\rm{Re}}\left\{\mathbf{g}_{k}\mathbf{h}^H_{b,k}\mathbf{q}^H_{k}\mathbf{h}^H_{b,k}\right\}\right]+ \mathbb{E}\left|\mathbf{h}_{b,k}\mathbf{q}_{k}\right|^2 +\mathbb{E}\left|\mathbf{h}_{b,k}\mathbf{h}^H_{b,k}\right|^2 \right)
\end{aligned}
\end{equation}
\hrulefill
\end{figure*}
where $(c)$ uses the fact that $\mathbb{E}\left[\mathbf{g}_{k}\mathbf{q}_{k}\mathbf{h}_{b,k}\mathbf{g}^H_{k} \right]=0$, $ \mathbb{E}\left[\mathbf{g}_{k}\mathbf{q}_{k}\mathbf{q}^H_{k}\mathbf{h}^H_{b,k}\right]=0$, $ \mathbb{E}\left[\mathbf{g}_{k}\mathbf{h}^H_{b,k}\mathbf{q}^H_{k}\mathbf{g}^H_{k}\right]=0$, and
$\mathbb{E}\left[\mathbf{h}_{b,k}\mathbf{q}_{k}\mathbf{q}^H_{k}\mathbf{g}_{k}\right]=0$ since each element of $\mathbf{h}_{b,k}$ is independent and zero mean, and $\mathbb{E}\left[\mathbf{g}_{k}\mathbf{h}^H_{b,k}\mathbf{h}_{b,k}\mathbf{h}^H_{b,k}\right]=0$, $\mathbb{E}\left[\mathbf{h}_{b,k}\mathbf{q}_{k}\mathbf{h}_{b,k}\mathbf{h}^H_{b,k}\right]=0$, $\mathbb{E}\left[\mathbf{h}_{b,k}\mathbf{h}^H_{b,k}\mathbf{h}_{b,k}\mathbf{g}^H_{k}\right]=0$, and $\mathbb{E}\left[\mathbf{h}_{b,k}\mathbf{h}^H_{b,k}\mathbf{q}^H_{k}\mathbf{h}^H_{b,k}\right]=0$ since $\mathbb{E}\left[s^k\right]=0$ when $k$ is odd and $s$ is normally distributed with zero mean~\cite{winkelbauer2014moments}.

The expectation of each part in~(\ref{e38}) is calculated as:
\subsubsection{$\mathbb{E}\left|\mathbf{g}_{k}\mathbf{q}_{k}\right|^2$} can be represented as (\ref{e39}).
\begin{figure*}[ht]
\begin{equation}\label{e39}
\begin{aligned}
&\mathbb{E}|\mathbf{g}_{k}\mathbf{q}_{k}|^2=\frac{\alpha^2_{r,b}\alpha^2_{k,r}}{(1+\kappa_{k,r})^2}\mathbb{E}\left\{\left| \sum_{u=1}^{2}\sum_{v=1}^{2}\mathbf{g}^{u}_{k}\mathbf{q}^{v}_{k}   \right|^2\right\}
=\frac{\alpha^2_{r,b}\alpha^2_{k,r}}{(1+\kappa_{k,r})^2}\Biggl(\mathbb{E}\left\{ \sum_{u=1}^{2}\sum_{v=1}^{2}\left|\mathbf{g}^{u}_{k}\mathbf{q}^{v}_{k}   \right|^2\right\}+2\mathbb{E}\left\{\rm{Re}\left\{ \mathbf{g}^{1}_{k}\mathbf{q}^{1}_{k}(\mathbf{q}^{2}_{k})^H(\mathbf{g}^{1}_{k})^H \right\} \right\} \\ & \qquad \left. +2\mathbb{E}\left\{\rm{Re}\left\{ \mathbf{g}^{1}_{k}\mathbf{q}^{1}_{k}(\mathbf{q}^{1}_{k})^H(\mathbf{g}^{2}_{k})^H \right\} \right\} +2\mathbb{E}\left\{\rm{Re}\left\{ \mathbf{g}^{1}_{k}\mathbf{q}^{1}_{k}(\mathbf{q}^{2}_{k})^H(\mathbf{g}^{2}_{k})^H \right\} \right\} +2\mathbb{E}\left\{\rm{Re}\left\{ \mathbf{g}^{1}_{k}\mathbf{q}^{2}_{k}(\mathbf{q}^{1}_{k})^H(\mathbf{g}^{2}_{k})^H \right\} \right\} \right. \\ & \qquad +2\mathbb{E}\left\{\rm{Re}\left\{ \mathbf{g}^{1}_{k}\mathbf{q}^{2}_{k}(\mathbf{q}^{2}_{k})^H(\mathbf{g}^{2}_{k})^H \right\} \right\} +2\mathbb{E}\left\{\rm{Re}\left\{ \mathbf{g}^{2}_{k}\mathbf{q}^{1}_{k}(\mathbf{q}^{2}_{k})^H(\mathbf{g}^{2}_{k})^H \right\} \right\}
 \Biggr).
\end{aligned}
\end{equation}
\hrulefill
\end{figure*}
First, we can calculate $\mathbb{E}\left| \mathbf{g}^{u}_{k}\mathbf{q}^{v}_{k} \right|^2$, $1\leq u,v \leq 2$. When $u=1$, we have
\begin{eqnarray}
\mathbb{E}\left| \mathbf{g}^{1}_{k}\mathbf{q}^{1}_{k}   \right|^2
&=& \mathbb{E}\left\{\kappa^2_{k,r}\left| \overline{\mathbf{h}}_{r,k}\mathbf{\Phi}^*\overline{\mathbf{H}}_{b,r} \overline{\mathbf{H}}^H_{b,r}\widetilde{\mathbf{\Phi}}\overline{\mathbf{h}}^H_{r,k}  \right|^2\right\} \nonumber \\
&=&\kappa^2_{k,r}M^2\left|f_k(\mathbf{\Phi}^*)\right|^2 \left|f^*_k(\mathbf{\Phi}^*)\right|^2 \label{e40} \\
\mathbb{E}\left| \mathbf{g}^{1}_{k}\mathbf{q}^{2}_{k} \right|^2
&=&\mathbb{E}\left\{\kappa_{k,r}\left| \overline{\mathbf{h}}_{r,k}\mathbf{\Phi}^*\overline{\mathbf{H}}_{b,r} \overline{\mathbf{H}}^H_{b,r}\widetilde{\mathbf{\Phi}}\widetilde{\mathbf{h}}^H_{r,k}  \right|^2\right\} \nonumber \\ &=& \kappa_{k,r}M^2 N\left|f_k(\mathbf{\Phi}^*)\right|^2. \label{e41}
\end{eqnarray}
Similarly, when $u=2$, we have
\begin{eqnarray}
\mathbb{E}\left| \mathbf{g}^{2}_{k}\mathbf{q}^{1}_{k}   \right|^2
&=&\mathbb{E}\left\{\kappa_{k,r}\left| \widetilde{\mathbf{h}}_{r,k}\mathbf{\Phi}^*\overline{\mathbf{H}}_{b,r} \overline{\mathbf{H}}^H_{b,r}\widetilde{\mathbf{\Phi}}\overline{\mathbf{h}}^H_{r,k}  \right|^2\right\} \nonumber \\ &=& \kappa_{k,r}M^2N \left|f^*_k(\mathbf{\Phi}^*)\right|^2, \label{e42} \\
\mathbb{E}\left| \mathbf{g}^{2}_{k}\mathbf{q}^{2}_{k}   \right|^2
&=&\mathbb{E}\left\{\left| \widetilde{\mathbf{h}}_{r,k}\mathbf{\Phi}^*\overline{\mathbf{H}}_{b,r} \overline{\mathbf{H}}^H_{b,r}\widetilde{\mathbf{\Phi}}\widetilde{\mathbf{h}}^H_{r,k}  \right|^2\right\} \label{e43} \\
&=& \mathbb{E}\left\{M^2\left| \widetilde{\mathbf{h}}_{r,k}\mathbf{t}_1(\mathbf{\Phi}^*)\widetilde{\mathbf{h}}^H_{r,k}  \right|^2\right\}
=M^2 \Psi_1(\mathbf{\Phi}^*). \nonumber
\end{eqnarray}
The remaining terms in (\ref{e39}) can be eliminated since
\begin{equation*}
\begin{aligned}
\mathbb{E}\left\{\rm{Re}\left\{ \mathbf{g}^{1}_{k}\mathbf{q}^{1}_{k}(\mathbf{q}^{2}_{k})^H(\mathbf{g}^{1}_{k})^H \right\} \right\} &=0 \quad\text{since} \; \widetilde{\mathbf{h}}_{r,k} \; \text{is zero mean}\\
\mathbb{E}\left\{\rm{Re}\left\{ \mathbf{g}^{1}_{k}\mathbf{q}^{1}_{k}(\mathbf{q}^{1}_{k})^H(\mathbf{g}^{2}_{k})^H \right\} \right\} &=0 \quad \text{same reason as above} \\
\mathbb{E}\left\{\rm{Re}\left\{ \mathbf{g}^{1}_{k}\mathbf{q}^{1}_{k}(\mathbf{q}^{2}_{k})^H(\mathbf{g}^{2}_{k})^H \right\} \right\} &=0 \quad \text{due to (\ref{e29})}\\
\mathbb{E}\left\{\rm{Re}\left\{ \mathbf{g}^{1}_{k}\mathbf{q}^{2}_{k}(\mathbf{q}^{1}_{k})^H(\mathbf{g}^{2}_{k})^H \right\} \right\} &=0 \quad \text{same reason as above}\\
\mathbb{E}\left\{\rm{Re}\left\{ \mathbf{g}^{1}_{k}\mathbf{q}^{2}_{k}(\mathbf{q}^{2}_{k})^H(\mathbf{g}^{2}_{k})^H \right\} \right\} &=0 \quad \text{since} \; \mathbb{E}\left[s^k\right]=0 \; \text{for} \\ & \qquad \quad \text{odd Gaussian moments}\\
\mathbb{E}\left\{\rm{Re}\left\{ \mathbf{g}^{2}_{k}\mathbf{q}^{1}_{k}(\mathbf{q}^{2}_{k})^H(\mathbf{g}^{2}_{k})^H \right\} \right\} &=0 \quad \text{same reason as above}.
\end{aligned}
\end{equation*}

\subsubsection{$ \mathbb{E}\left[
{\rm{Re}}\left\{\mathbf{g}_{k}\mathbf{q}_{k}\mathbf{h}_{b,k}\mathbf{h}^H_{b,k}\right\}\right]$} Since $\mathbf{h}_{b,k}$ is independent of $\mathbf{g}_{k}$ and $\mathbf{q}_{k}$, we have 
 \begin{equation}
 \mathbb{E}\left[
 {\rm{Re}}\left\{\mathbf{g}_{k}\mathbf{q}_{k}\mathbf{h}_{b,k}\mathbf{h}^H_{b,k}\right\}\right]=\alpha_{k,b}M \mathbb{E}\left[
 {\rm{Re}}\left\{\mathbf{g}_{k}\mathbf{q}_{k}\right\}\right] .
 \end{equation}
Considering $\widetilde{\mathbf{h}}_{r,k}$, we can ignore the zero expectation part in $\mathbb{E}\left[
 \mathbf{g}_{k}\mathbf{q}_{k}\right]$ and obtain the following 2 results
\begin{equation}\label{e44}
\begin{aligned}
&\mathbb{E}\left[
 {\rm{Re}}\left\{\kappa_{k,r}\overline{\mathbf{h}}_{r,k}\mathbf{\Phi}^*\overline{\mathbf{H}}_{b,r} \overline{\mathbf{H}}^H_{b,r}\widetilde{\mathbf{\Phi}}\overline{\mathbf{h}}^H_{r,k}  \right\}\right]\\
 & \quad =\mathbb{E}\left[
 M{\rm{Re}}\left\{\kappa_{k,r}\overline{\mathbf{h}}_{r,k}\mathbf{\Phi}^*\mathbf{a}_N\mathbf{a}^H_N \widetilde{\mathbf{\Phi}}\overline{\mathbf{h}}^H_{r,k}  \right\}\right]\\
 & \quad =\kappa_{k,r} M {\rm{Re}}\left\{f_k(\mathbf{\Phi}^*)f^*_k(\mathbf{\Phi}^*) \right\},\\
&\mathbb{E}\left[
 {\rm{Re}}\left\{\widetilde{\mathbf{h}}_{r,k}\mathbf{\Phi}^*\overline{\mathbf{H}}_{b,r} \overline{\mathbf{H}}^H_{b,r}\widetilde{\mathbf{\Phi}}\widetilde{\mathbf{h}}^H_{r,k}  \right\}\right]\\
 & \quad =\mathbb{E}\left[
 M{\rm{Re}}\left\{{\rm{Tr}}\left(\mathbf{\Phi}^*\mathbf{a}_N\mathbf{a}^H_N \widetilde{\mathbf{\Phi}} \right) \right\}\right]
 =M{\rm{Re}}\left\{{\rm{Tr}}\left(\mathbf{t}_1(\mathbf{\Phi}^*)\right)\right\}.
\end{aligned}
\end{equation}
Combining the above together, we have
\begin{equation}\label{e46}
\begin{aligned}
&\mathbb{E}\left[
 {\rm{Re}}\left\{\mathbf{g}_{k}\mathbf{q}_{k}\mathbf{h}^H_{k,b_1}\mathbf{h}_{k,b_1}\right\}\right]\\&  \qquad= \frac{M\alpha_{k,b}\alpha_{k,r}\alpha_{r,b}}{(1+\kappa_{k,r})}\left(\kappa_{k,r} M {\rm{Re}}\left\{f_k(\mathbf{\Phi}^*)f^*_k(\mathbf{\Phi}^*) \right\} \right. \\ 
 & \qquad\qquad \left.+M{\rm{Re}}\left\{{\rm{Tr}}\left(\mathbf{t}_1(\mathbf{\Phi}^*)\right)\right\}\right)
 \\&  \qquad= \frac{M^2\alpha_{k,b}\alpha_{k,r}\alpha_{r,b}}{(1+\kappa_{k,r})}\left(\kappa_{k,r} {\rm{Re}}\left\{f_k(\mathbf{\Phi}^*)f^*_k(\mathbf{\Phi}^*) \right\} \right. \\ 
 & \qquad\qquad \left. +{\rm{Re}}\left\{{\rm{Tr}}\left(\mathbf{t}_1(\mathbf{\Phi}^*)\right)\right\}\right).
\end{aligned}
\end{equation}

\subsubsection{$ \mathbb{E}\left|\mathbf{g}_{k}\mathbf{h}^H_{b,k}\right|^2$ and $\mathbb{E}\left|\mathbf{h}_{b,k}\mathbf{q}_{k}\right|^2$}
First, we can directly calculate $\mathbb{E}|\mathbf{g}_{k}\mathbf{h}^H_{b,k}|^2$ as follows:
\begin{equation}\label{e47}
\begin{aligned}
\mathbb{E}|\mathbf{g}_{k}\mathbf{h}^H_{b,k}|^2 &=\alpha_{k,b}\mathbb{E}\left[\mathbf{g}_{k}\mathbf{g}^H_{k}\right]=
\frac{\alpha_{k,b}\alpha_{r,b}\alpha_{k,r}}{(1+\kappa_{k,r})}\sum_{u=1}^{2}\mathbb{E} \Vert\mathbf{g}^u_{k}\Vert^2\\
&\overset{(b)}{=}\frac{\alpha_{k,b}\alpha_{r,b}\alpha_{k,r}\left(\kappa_{k,r} M\left|f_k(\mathbf{\Phi}^*)\right|^2+MN\right)}{(1+\kappa_{k,r})}.
\end{aligned}
\end{equation}
Similarly, $\mathbb{E}\left|\mathbf{h}_{b,k}\mathbf{q}_{k}\right|^2$ can be calculated as follows:
\begin{equation}\label{e48}
\begin{aligned}
\mathbb{E}\left|\mathbf{h}_{b,k}\mathbf{q}_{k}\right|^2&=\alpha_{k,b}\mathbb{E}\left[\mathbf{q}^H_{k}\mathbf{q}_{k}\right]
=\frac{\alpha_{k,b}\alpha_{r,b}\alpha_{k,r}}{(1+\kappa_{k,r})}\sum_{v=1}^{2} \mathbb{E}\Vert\mathbf{q}^u_{k}\Vert^2\\
&\overset{(b)}{=}\frac{\alpha_{k,b}\alpha_{r,b}\alpha_{k,r}\left(\kappa_{k,r} M\left|f^*_k(\mathbf{\Phi}^*)\right|^2+MN\right)}{(1+\kappa_{k,r})}.
\end{aligned}
\end{equation}

\subsubsection{$\mathbb{E}\left[
 {\rm{Re}}\left\{\mathbf{g}_{k}\mathbf{h}^H_{b,k}\mathbf{q}^H_{k}\mathbf{h}^H_{b,k}\right\}\right] $} Based on the following analysis, we have $\mathbb{E}\left[
 {\rm{Re}}\left\{\mathbf{g}_{k}\mathbf{h}^H_{b,k}\mathbf{q}^H_{k}\mathbf{h}^H_{b,k}\right\}\right] =0$.
\begin{proof}
According to (\ref{e23}) and (\ref{e24}), $\mathbb{E}\left[
\mathbf{g}_{k}\mathbf{h}^H_{b,k}\mathbf{q}^H_{k}\mathbf{h}^H_{b,k}\right]=\mathbb{E}\left[\mathbf{h}_{r,k}\mathbf{\Phi}^* \mathbf{H}_{b,r}\mathbf{h}^H_{b,k}\mathbf{h}_{r,k}(\widetilde{\mathbf{\Phi}})^H \mathbf{H}_{b,r}\mathbf{h}^H_{b,k}\right]$ can be divided into 4 terms. Given the independence and zero mean of each element in $\mathbf{h}^H_{b,k}$ and $\widetilde{\mathbf{h}}_{r,k}$, two of the terms have an expectation of 0. Then we calculate the expectations of the remaining two terms. Since $\mathbf{h}^H_{b,k}$ follows the Rayleigh fading distribution, using~(\ref{e29}), we have that
\begin{equation}
\mathbb{E}\left[{\rm{Re}}\left\{\overline{\mathbf{h}}_{r,k}\mathbf{\Phi}^* \overline{\mathbf{H}}_{b,r}\mathbf{h}^H_{b,k}\overline{\mathbf{h}}_{r,k}(\widetilde{\mathbf{\Phi}})^H \overline{\mathbf{H}}_{b,r}\mathbf{h}^H_{b,k}\right\}\right]=0 .
\end{equation}
For the other term, we set $\mathbf{w}_1(\mathbf{\Phi} )=\widetilde{\mathbf{h}}_{r,k}\mathbf{\Phi}^*=\widetilde{\mathbf{h}}_{r,k}\mathbf{J} \mathbf{\Phi}$, $\mathbf{w}_2=\overline{\mathbf{H}}_{b,r}\mathbf{h}^H_{b,k}$, and $\mathbf{w}_3(\mathbf{\Phi} )=\widetilde{\mathbf{h}}_{r,k}(\widetilde{\mathbf{\Phi}})^H=\widetilde{\mathbf{h}}_{r,k}\mathbf{\Phi} \mathbf{J}^T$. According to the mapping $\mathbf{J}: n\to [n]$ and assume the mapping rule: $\mathbf{J}^T: n\to [n]^*$, we have
\begin{equation}
\begin{aligned}
\mathbb{E}\left[\widetilde{\mathbf{h}}_{r,k}\mathbf{\Phi}^* \overline{\mathbf{H}}_{b,r}\mathbf{h}^H_{b,k}\widetilde{\mathbf{h}}_{r,k}(\widetilde{\mathbf{\Phi}})^H \overline{\mathbf{H}}_{b,r}\mathbf{h}^H_{b,k}\right]\\=\mathbb{E}\left[\mathbf{w}_1(\mathbf{\Phi} )\mathbf{w}_2\mathbf{w}_3(\mathbf{\Phi} )\mathbf{w}_2\right]
\end{aligned}
\end{equation}
with (\ref{e49}), which is denoted as follows:
\begin{equation}\label{e49}
\begin{aligned}
\mathbf{w}_1(\mathbf{\Phi} )\mathbf{w}_2&=\sum_{j=1}^{N}\left(\left[\widetilde{\mathbf{h}}_{r,k}\right]_{[j]}\theta_j \sum_{i=1}^{M}\left[\overline{\mathbf{H}}_{b,r}\right]_{j,i} [\mathbf{h}^H_{b,k}]_i\right), \\
\mathbf{w}_3(\mathbf{\Phi} )\mathbf{w}_2&=\sum_{j=1}^{N}\left(\left[\widetilde{\mathbf{h}}_{r,k}\right]_{j}\theta_j \sum_{i=1}^{M}\left[\overline{\mathbf{H}}_{b,r}\right]_{[j]^*,i} [\mathbf{h}^H_{b,k}]_i\right).
\end{aligned}
\end{equation}
Substituting (\ref{e49}), we have (\ref{e51}).
\begin{figure*}
\begin{equation}\label{e51}
\begin{aligned}
&\mathbb{E}\left[\mathbf{w}_1(\mathbf{\Phi} )\mathbf{w}_2\mathbf{w}_3(\mathbf{\Phi} )\mathbf{w}_2\right]=\mathbb{E}\left[\sum_{p=1}^{N}\sum_{q=1}^{N}\left( \left[\widetilde{\mathbf{h}}_{r,k}\right]_{[p]}\theta_p\left[\widetilde{\mathbf{h}}_{r,k}\right]_{q}\theta_q \left(\sum_{i=1}^{M}\left[\overline{\mathbf{H}}^H_{r,b}\right]_{p,i} [\mathbf{h}^H_{b,k}]_i \right)\left(\sum_{i=1}^{M}\left[\overline{\mathbf{H}}^H_{r,b}\right]_{[q]^*,i} [\mathbf{h}^H_{b,k}]_i \right) \right) \right].
\end{aligned}
\end{equation}
\hrulefill
\end{figure*}
Given the independence and zero mean of $[\mathbf{h}^H_{b,k}]_i$ and $\left[\widetilde{\mathbf{h}}_{r,k}\right]_i$, the expectation of items with $\left[\widetilde{\mathbf{h}}_{r,k}\right]_{[p]}\left[\widetilde{\mathbf{h}}_{r,k}\right]_{q}$ in (\ref{e47}) is 0 if $[p]\neq q$. Since $\left[\widetilde{\mathbf{h}}_{r,k}\right]_i \sim \mathcal{CN}(0,1)$, we set $\left[\widetilde{\mathbf{h}}_{r,k}\right]_i =a+jb$, where $a,b \sim \mathcal{N}(0,\frac{1}{2})$. Then $\mathbb{E}\left(\left[\widetilde{\mathbf{h}}_{r,k}\right]_i\right)^2=\mathbb{E}\left[ (a+jb)^2\right]=\mathbb{E}\left[ a^2+2jab-b^2\right]=0$, and we find that the expectation of items with $[p]=q$ in (\ref{e51}) is 0. Finally, we can prove $\mathbb{E}\left[
 {\rm{Re}}\left\{\mathbf{g}_{k}\mathbf{h}_{k,b_1}\mathbf{q}^H_{k}\mathbf{h}_{k,b_1}\right\}\right] =0$. $\hfill\blacksquare$
\end{proof}

\subsubsection{$\mathbb{E}\left|\mathbf{h}_{b,k}\mathbf{h}^H_{b,k}\right|^2 $}According to \cite[Appendix A]{9355404}, we have
\begin{equation}\label{e52}
\begin{aligned}
\mathbb{E}\left|\mathbf{h}_{b,k}\mathbf{h}^H_{b,k}\right|^2 =\alpha^2_{k,b}(M^2+M).
\end{aligned}
\end{equation}
Combining the above equations and after some simplifying steps, we have completed the proof of 
Theorem~\ref{t1}.

\bibliographystyle{IEEEtran}
\bibliography{crack}

\end{document}